\def\rf#1{(\@rf#1,.)}
\def\@rf#1,{\ref{eq:#1}\@ifnextchar . {\@endrf}{, \@rf}}
\def\@endrf.{}
\newcommand\qaq{\quad\text{and}\quad}
\newcommand\defterm[1]{\cite[#1]{tgn2015c}}
\renewcommand{\~}[1]{\mathbf{#1}}
\newcommand{\setbg}[1]{\bigl\{\,#1\,\bigr\}}
\newcommand{\casemod}[2]{\left\{ \begin{array}{#1} #2 \end{array}\right. }
\newcommand{\pendL}{L}
\newcommand{\pendG}{g}
\newcommand{\Sigmeth}{$\Sigma$-method\xspace}
\newcommand{\s}[1]{\text{\scriptsize$#1$}}
\newcommand\offsetpair{(\~c;\~d)}
\newcommand\ssth[1]{#1{\text{th}}}
\newcommand{\sij}[2]{\sigma_{#1#2}}
\newcommand{\Sig}{\Sigma}
\newcommand{\Jac}{\~J}
\newcommand{\Jij}[2]{\Jac_{#1#2}}
\newcommand\bninfty{\boldsymbol{-\infty}}
\newcommand{\newsig}{\overline\sigma}
\newcommand{\newsij}[2]{\newsig_{#1#2}}
\newcommand{\newSig}{\overline\Sigma}
\newcommand{\newJ}{\~{\overline J}}
\newcommand{\newJac}{\~{\overline J}}
\newcommand{\newT}{\overline T}
\newcommand{\newf}{\overline f}
\newcommand{\newc}{\overline c}
\newcommand{\newd}{\overline d}
\newcommand{\Aess}{\~A_{\text{ess}}}
\newcommand{\Sess}{S_{\text{ess}}}
\newcommand{\rnge}[2]{#1\,{:}\,#2}
\newcommand{\daesa}{{\sc daesa}\xspace}
\newcommand{\matlab}{{\sc Matlab}\xspace}
\newcommand{\lam}{\lambda}
\newcommand{\indxk}{l}
\newcommand{\nzset}{L}
\newcommand{\eqsetI}{I}
\newcommand{\hoder}[2]{\sigma\left(#1,#2\right)}
\newcommand\eqsetIES{M}
\newcommand\nzsetES{J}
\newcommand{\jtwo}{r}
\newcommand{\val}[1]{\textnormal{Val}(#1)}
\newcommand{\set}[1]{\{\, #1 \,\}}
\newcommand{\pp}[2]{\frac{\partial #1}{\partial #2}}
\newcommand{\ppin}[2]{\partial #1/\partial #2}
\newcommand{\fracin}[2]{#1/#2}
\newcommand{\blk}[1]{\text{blockOf}(#1)}
\definecolor{lgray}{rgb}{0.6,0.6,0.6}
\definecolor{palegray}{rgb}{0.7,0.7,0.7}
\newcommand{\OK}[1]{\colorbox{palegray}{\ensuremath{#1}}}
\newcommand{\rrf}[2]{(\ref{eq:#1}--\ref{eq:#2})}
\newcommand{\scref}[1]{\ref{sc:#1}}
\newcommand{\thref}[1]{\ref{th:#1}}
\newcommand{\SSCref}[1]{\S\ref{ssc:#1}}
\newcommand{\SCref}[1]{\S\ref{sc:#1}}
\newcommand{\EXref}[1]{Example~\ref{ex:#1}}
\newcommand{\LEref}[1]{Lemma~\ref{le:#1}}
\newcommand{\FGref}[1]{Figure~\ref{fg:#1}}
\newcommand{\THref}[1]{Theorem~\ref{th:#1}}
\newcommand\nequiv{\not\equiv}
\newcommand\idq{{q}}
\newcommand\idQ{{Q}}
\newcommand\idv{{w}}
\newcommand\idr{{q}}
\newcommand{\llra}{\Longleftrightarrow}
\newcommand{\wblk}[1]{\overline{\text{blockOf}}(#1)}
\newcommand{\wN}[1]{\overline N_{#1}}
\newcommand{\lbar}[1]{\multicolumn{1}{|c}{$#1$}}
\newcommand{\dbar}{\lbar{}}
\newcommand{\SpJac}{\~S_0}
\newcommand{\hu}{{\widehat u}}
\newcommand{\hv}{{\widehat v}}
\newcommand\phanbull{\phantom{^\bullet}}
\newcommand\LCconst{\underline{c}}
\newcommand\ESconst{\overline{c}}
\newcommand\underC{\LCconst}
\newcommand\overC{\ESconst}
\newcommand\vecu{\~u}
\newcommand\vecv{\~v}
\newcommand\newL{{\overline \nzset}}
\newcommand\newJes{{\overline \nzsetES}}
\newcommand\Jblk[2]{\Jij{[#1}{#2]}}
\newcommand\newJblk[2]{\overline\Jac_{#1#2}}
\renewcommand\ker[1]{\textnormal{ker}(#1)}
\newcommand\coker[1]{\textnormal{coker}(#1)}
\renewcommand\SCref[1]{Section~\ref{sc:#1}}
\renewcommand\newc{{\widetilde c}}
\renewcommand\newd{{\widetilde d}}
\renewcommand\Sig{\boldsymbol{\Sigma}}
\renewcommand{\newSig}{\overline{\boldsymbol{\Sigma}}}
\newcommand\Jnol{\nzsetES\setminus\setbg\indxk}
\renewcommand\Jblk[2]{\Jac_{#1#2}}
\renewcommand\Jij[2]{J_{#1#2}}
\renewcommand\idr{r}
\renewcommand\idv{m}
\renewcommand\SpJac{\~S_0\offsetpair}
\newcommand\cidjEqnum{(2.2)}
\newcommand\sysjacEqnum{(2.6)}
\newcommand\truehodEqnum{(4.1)}
\newcommand\LCdefEqnum{(4.3)}
\newcommand\LCcondEqnum{(4.4)}
\newcommand\ESdefEqnum{(4.10)}
\newcommand\EScondEqnum{(4.11)}
\journalname{BIT}
\begin{document}
\title{Conversion Methods, Block Triangularization, and Structural Analysis of Differential-Algebraic Equation Systems}

\titlerunning{Conversion Methods, Block Triangularization, and Structural Analysis of DAEs}

\author{Guangning Tan         \and
Nedialko S. Nedialkov \and\\
John D. Pryce
}


\institute{G. Tan
\at School of Computational Science and Engineering, McMaster University, \at
1280 Main Street West, Hamilton, Ontario L8S 4K1, Canada,
\email{tang4@mcmaster.ca} \and
N. S. Nedialkov
\at Department of Computing and Software, McMaster University, \at
1280 Main Street West, , Hamilton, Ontario L8S 4K1, Canada,
\email{nedialk@mcmaster.ca} \and
J. D. Pryce
\at School of Mathematics, Cardiff University, \at Senghennydd Road, Cardiff CF24 4AG, Wales, UK.,
\email{prycejd1@cardiff.ac.uk}
}

\date{Received: date / Accepted: date}

\maketitle

\begin{abstract}
In a previous article, the authors developed two conversion methods to improve the \Sigmeth for structural analysis (SA) of differential-algebraic equations (DAEs). These methods reformulate a DAE on which the \Sigmeth fails into an equivalent problem on which this SA is more likely to succeed with a generically nonsingular Jacobian. 
The basic version of these methods processes the DAE as a whole.
This article presents the block version that exploits block triangularization of a DAE. Using a block triangular form of a Jacobian sparsity pattern, we identify which diagonal blocks of the Jacobian are identically singular and then perform a conversion on each such block. This approach improves the efficiency of finding a suitable conversion for fixing SA's failures. All of our conversion methods can be implemented in a computer algebra system so that every conversion can be automated.
\keywords{differential-algebraic equations \and structural analysis \and block triangular form \and modeling\and symbolic computation}
\subclass{34A09\and 65L80\and 41A58 \and 68W30}
\end{abstract}


\def\bitinfo{\vbox{\hbox to 12.2cm{}
       \hbox to 12.2cm{}
      \hfill}} 
\setcounter{page}{001}

\section{Introduction.}\label{sc:intro}

This article is a continuation of \cite{tgn2015c}, 
in which we presented two conversion methods for improving the \Sigmeth \cite{pryce2001a} for structural analysis (SA) of DAEs.  When this SA fails on a DAE with an identically singular (but structurally nonsingular) System Jacobian, our conversion methods reformulate the DAE into an equivalent problem on which the SA is more likely to succeed with a generically nonsingular System Jacobian \cite{tgn2015aCAS,tgn2015c}.

These two conversion methods are the linear combination (LC) method and the expression substitution (ES) method. The former is based on replacing an existing equation by a linear combination of some equations and derivatives of them. The latter is based on replacing some existing derivatives\footnote{Throughout this article, ``derivatives of a variable'' include the variable itself as its $0$th derivative.} by expressions that contain newly introduced variables and derivatives of them. In the ES method, the equations that prescribe such replacements are also appended to the original DAE, so the resulting system is an enlarged one. The main result of a conversion using either method is a strict decrease in the value of the signature matrix \cite{tgn2015c}. Based on our experience, we conjecture that such a decrease tends to give a better problem formulation of a DAE from SA perspective.

Our works \cite{NedialkovPryce2012a, NedialkovPryce2012b, pryce2014btf} show how to construct block triangular forms (BTFs) of a DAE using the structural data obtained from the \Sigmeth. A BTF indicates how each part of the DAE influences [resp. is influenced by] other parts. The interdependences between all pairs of blocks may be depicted by a fine-block graph \cite{pryce2014btf}. Exploiting the underlying structure of a DAE, we can compute the derivatives of its solution in a blockwise fashion \cite{nedialkov2016a}, or perform a dummy derivative index reduction algorithm \cite{mckenzie2015a, McKenzie2016a}. 

We refer the reader to the previous article \cite{tgn2015c} for a summary of the \Sigmeth, details of its failures, the basic conversion methods, and explanations of the equivalence of DAEs. By ``basic'' we mean that these methods do not exploit BTFs of a DAE. We shall follow the notation in \cite{tgn2015c}. 

In this article, we combine our conversion methods with a block triangularization of a DAE and derive our block conversion methods. When the System Jacobian is identically singular, and the DAE has a nontrivial BTF---that is, having at least two diagonal blocks---we can identify which blocks are identically singular and perform a conversion on each such block. Now that we only deal with equations and variables within a block, which is usually of a smaller size compared to the whole DAE, these block methods require fewer symbolic computations and hence are expected to be more efficient in finding a useful conversion for fixing SA's failures.

\SCref{SA} reviews BTFs of a sparsity pattern and BTFs of a DAE.
\SCref{convBTF} presents our block conversion methods and demonstrates their application on a DAE from \cite{campbell1995solvability}. \SCref{exam} gives more examples, in which the two DAEs are obtained from electrical circuit analysis \cite{TestSetIVP}. \SCref{conclu} gives concluding remarks.

\renewcommand{\SCref}[1]{\S\ref{sc:#1}}

\section{Block triangularization of DAEs.}\label{sc:SA}
In \SSCref{BTF}, we introduce notation for a BTF of a sparsity pattern.
In \SSCref{BTFDAE}, we review how to derive a BTF of a DAE; more details are in \cite{NedialkovPryce2012a, pryce2014btf}.

We do not repeat the definitions and formulas for the notation in the \Sigmeth theory, such as a {\em signature matrix} $\Sig=(\sij{i}{j})$ and its {\em value} $\val\Sig$, a {\em highest-value transversal (HVT)} $T$ of $\Sig$, a {\em valid offset pair} $\offsetpair$, a {\em System Jacobian} $\Jac\offsetpair=(\Jij{i}{j})$, and so forth. We refer the reader to \cite{tgn2015c} for details.

Terms are in {\sl slanted font} at their defining occurrence. We use bold font for matrices that may split into blocks, and for the sub-matrices. Individual entries of a matrix are in lowercase. For example, matrix $\~A$ has sub-matrices $\~A_{lm}$ and entries $a_{ij}$.

\subsection{Block triangular forms of a sparsity pattern.}\label{ssc:BTF}
Let\footnote{The colon notation $\rnge{p}{q}$ for integers $p, q$ denotes either the unordered set or the enumerated list of integers $i$ with $p\le i\le q$, depending on context.} 
$R=\rnge{1}{n}$ be the set of indices of $n$ rows (equations), and let  $C=\rnge{1}{n}$ be the set of indices of $n$ columns (variables). 
A {\sl sparsity pattern} $\~A$ is a subset of the Cartesian product $R\times C$ that contains row-column index pairs $(i,j)$. We can view $\~A$ as its incidence matrix $(a_{ij})$, where $a_{ij}$ equals 1 if $(i,j)\in\~A$ and 0 otherwise. A {\sl transversal} of $\~A$ is $n$ positions in $\~A$ with exactly one position in each row and each column. If $\~A$ has some transversal, then it is {\sl structurally nonsingular}.  
The union of all transversals of $\~A$ comprise its {\sl essential sparsity pattern} $\Aess$ \cite{pryce2014btf}. Obviously, $\~A$ is structurally nonsingular if and only if $\Aess$ is nonempty.

Assume henceforth that $\~A$ is structurally nonsingular. Let $P$ and $Q$ be two suitable permutation matrices for $\~A$, such that the permuted incidence matrix $\~A'=P\~AQ$ can be written in a $p\times p$ block form
\begin{align}\label{eq:btf0}
\~A' =
\left[\begin{array}{*4{@{\;}c}@{}}
\~A_{11} &\~A_{12} &\cdots &\~A_{1p} \\
&\~A_{22} &\cdots &\~A_{2p} \\
&         &\ddots &\vdots \\
&         &       &\~A_{pp}
\end{array}\right]\;,
\end{align} 
where each diagonal block $\~A_{\idq\idq}$, $\idq=\rnge{1}{p}$, is square of positive size $N_\idq$. 
We say the block form \rf{btf0} is a {\sl BTF} of $\~A$. 
Blanks in \rf{btf0} mean that a sub-matrix $\~A_{kl}$ below the block diagonal with $k>l$ is empty.

A sparsity pattern is {\sl irreducible}, if it cannot be permuted to the form \rf{btf0} with $p>1$ \cite{Duff86a}; otherwise it is {\sl reducible}. A BTF is  {\sl irreducible} if each diagonal block is {\sl irreducible}; otherwise it is {\sl reducible} \cite{pryce2014btf}. Hence, if \rf{btf0} is  irreducible, then $p$ is the largest number of diagonal blocks among all possible BTFs of $\~A'$.

When we say block $\idq$ of a matrix in a BTF, we shall refer to the $\ssth{\idq}$ diagonal block submatrix. For $\idq=\rnge{1}{p}$, we define for block $\idq$ the index set
\begin{equation*}
\begin{aligned}
B_\idq &= \text{the set of indices $i$ that belong to block $\idq$}  
\;.
\end{aligned}
\end{equation*}
Throughout this article, they are the indices of the permuted $\~A'$, not those of the original $\~A$.

Another useful notation is $\blk{i}$ that denotes the block number $\idq$ such that index $i\in B_\idq$. Since each diagonal block is square, both $B_\idq$ and $\blk{i}$ notation apply to rows and columns equally. To summarize, for $i\in \rnge{1}{n}$ and $\idq\in\rnge{1}{p}$,
\begin{align*}
\blk{i} = \idq 
\llra i\in B_\idq 
&\llra  \sum_{\idv=1}^{\idq-1} N_\idv+1 \le i \le\sum_{\idv=1}^{\idq} N_\idv\;.
\end{align*}

\begin{example}
We illustrate in \FGref{btfA} the above block notation with a sparsity pattern of two nontrivial BTFs. 
\begin{figure}
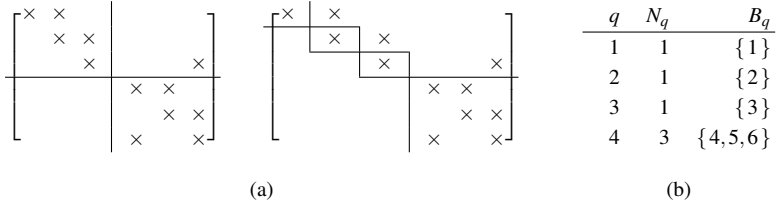

\begin{tabular}{ccc}
$ 
 \begin{blockarray}{*6{c@{\hspace{5pt}}}}\\
\begin{block}{[*6{c@{\hspace{5pt}}}]}
\times & \times   &  &\dbar  &   & \\
& \times  & \times &\dbar&   &\\  
&   & \times   & \dbar  & &  \times\\ \cline{1-6}
&&& \lbar{\times } & \times & \\
&&& \dbar  & \times & \times  \\
&&& \lbar{\times } &   & \times\\
\end{block}
\end{blockarray}
\qquad
 \begin{blockarray}{*6{c@{\hspace{5pt}}}}\\
\begin{block}{[*6{c@{\hspace{5pt}}}]}
\times & \lbar{\times}   &  &  &   & \\\cline{1-2}
& \lbar{ \times } & \lbar{ \times } &&   &\\   \cline{2-3}
&   &\lbar{\times}   & \dbar  & &  \times\\ \cline{3-6}
&&& \lbar{\times } & \times & \\
&&& \dbar  & \times & \times\\
&&& \lbar{\times } &   & \times\\
\end{block}
\end{blockarray}
$
&\quad&  $\def\arraystretch{1.2}\begin{array}{*3{@{\hspace{10pt}}r}}
 \idq &  N_\idq  & B_\idq  \\ \hline
1 & 1 & \set{1}  \\
2 & 1 & \set{2}  \\
3 & 1 & \set{3}  \\
4 & 3 & \set{4,5,6}
  \end{array} $
  \\ (a) &\quad& (b)
\end{tabular}  
\caption{
\label{fg:btfA}
(a) Two nontrivial BTFs of the same sparsity pattern. The left one is reducible with number of blocks $p=2$. The right one is irreducible with $p=4$. 
(b) Block information for the irreducible BTF.
}
\end{figure}
\end{example}

The following lemma connects the transversals of a sparsity pattern $\~A$ and the transversals of its diagonal blocks in some BTF.
\begin{lemma}\label{le:BTF4.1}\cite[Lemma 2.4]{pryce2014btf}
Any transversal $T$ of a sparsity pattern $\~A$ is contained in the union of the diagonal blocks of any BTF of $\~A$, that is, $T\subseteq \~A_{11}\cup\cdots\cup\~A_{pp}$.
\end{lemma}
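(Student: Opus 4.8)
\emph{Proof proposal.} The plan is to reduce to the case that $\~A$ is already written in the block form \rf{btf0}: I would replace $\~A$ by the permuted pattern $\~A'=P\~AQ$ and $T$ by its image under the same row and column permutations. This loses no generality, since a row/column permutation carries a transversal of $\~A$ to a transversal of $\~A'$, and the diagonal blocks named in the statement are by definition those of $\~A'$. So from here on $\~A$ has the shape \rf{btf0}, with block index sets $B_1,\dots,B_p$.

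Next I would encode $T$ as a bijection $\tau\colon\rnge{1}{n}\to\rnge{1}{n}$, where $(i,\tau(i))$ is the unique position of $T$ in row $i$; $\tau$ is a bijection precisely because $T$ has one position in each column. Block upper triangularity of \rf{btf0} says every $(i,j)\in\~A$ obeys $\blk{i}\le\blk{j}$, hence $\blk{i}\le\blk{\tau(i)}$ for all $i$. The core step is a counting argument on the prefix sets $\calU_\idq=B_1\cup\cdots\cup B_\idq=\rnge{1}{N_1+\cdots+N_\idq}$ for $\idq=\rnge{1}{p}$: if $j\in\calU_\idq$ then $\blk{\tau^{-1}(j)}\le\blk{j}\le\idq$, so $\tau^{-1}(j)\in\calU_\idq$; thus $\tau^{-1}(\calU_\idq)\subseteq\calU_\idq$, and equality of these finite, equal-sized sets follows, i.e.\ $\tau(\calU_\idq)=\calU_\idq$. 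Subtracting the same identity for $\idq-1$ (legitimate because $\tau$ is injective and $\calU_{\idq-1}\subseteq\calU_\idq$) gives $\tau(B_\idq)=B_\idq$. Then for each $i\in B_\idq$ the position $(i,\tau(i))$ has both indices in $B_\idq$, so it lies in $\~A_{\idq\idq}$; letting $i$ and $\idq$ range over everything yields $T\subseteq\~A_{11}\cup\cdots\cup\~A_{pp}$.

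I do not expect a genuine obstacle --- the argument is elementary. The two points I would be careful about are the bookkeeping in the reduction, so that ``a position of $T$ lies in $\~A_{\idq\idq}$'' is unambiguous after relabeling, and the implication $\tau^{-1}(\calU_\idq)\subseteq\calU_\idq\Rightarrow\tau^{-1}(\calU_\idq)=\calU_\idq$, which genuinely uses that $\tau$ is a bijection of the \emph{entire} index set $\rnge{1}{n}$ --- exactly what ``transversal'' supplies, and where the structural nonsingularity of $\~A$ (equivalently $\Aess\neq\emptyset$) enters tacitly, since otherwise no such $T$ exists. An alternative I would consider is to expand the determinant of the block-triangular $\~A'$ as $\prod_{\idq}\det\~A_{\idq\idq}$ and match the nonzero Leibniz terms to transversals; but the direct combinatorial route above is shorter and avoids introducing determinants.
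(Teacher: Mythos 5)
Your argument is correct: permuting to the block form, encoding $T$ as a bijection $\tau$ with $\blk{i}\le\blk{\tau(i)}$, and using the finite counting argument on the prefix sets $\calU_\idq$ to force $\tau(B_\idq)=B_\idq$ is exactly the standard proof of this fact. Note that the paper itself does not prove this lemma --- it is quoted from \cite[Lemma 2.4]{pryce2014btf} --- so there is no in-paper proof to compare against, but your reasoning (including the reduction to the permuted pattern and the use of bijectivity of $\tau$ on all of $\rnge{1}{n}$) is complete and would serve as a valid substitute.
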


Equivalently, the intersection of $T$ with block $\idq$ of $\~A$ is a transversal $T_\idq$ of $\~A_{\idq\idq}$.

\subsection{Block triangular forms of a DAE.}\label{ssc:BTFDAE}
The natural sparsity pattern of a DAE indicates if a variable $x_j$ occurs in an equation $f_i$. Each such occurrence corresponds to a finite entry $\sij{i}{j}$ in $\Sig$, and hence we have
\begin{equation*}
\~S= \setbg{(i,j)\ \mid\ \sij{i}{j}>-\infty\ }\qquad\text{(the sparsity pattern of $\Sig$)}\;.
\end{equation*}
If $\~S$ has some transversal, then $\Sig$ has a transversal with finite $\sij{i}{j}$'s and a finite $\val\Sig$ \cite{pryce2001a}, so the DAE is structurally well posed (SWP) \cite{NedialkovPryce2012a}; otherwise it is structurally ill posed. Here, we shall deal with the SWP case only.

A more informative BTF derives from the sparsity pattern $\~S_0 = \SpJac$ of a System Jacobian $\Jac=\Jac\offsetpair$ as defined in \defterm{\sysjacEqnum}:
\renewcommand\SpJac{\~S_0}
\begin{equation}\label{eq:spJac}
\SpJac= \setbg{(i,j)\ \mid\ d_j-c_i=\sij{i}{j}\ }\qquad\text{(the sparsity pattern of $\Jac$)}\;.
\end{equation}
By \defterm{\cidjEqnum}, $d_j-c_i=\sij{i}{j}$ holds on a HVT $T$ of $\Sig$, so $T$ is also a transversal of $\SpJac$.

A less obvious set contains the positions that contribute to $\det(\Jac)$:
\begin{equation*}
\~\Sess = \text{the union of all HVTs of $\Sig$}\qquad\text{(the essential sparsity pattern of $\Sig$)}\;,
\end{equation*}
which is also the essential sparsity pattern of $\SpJac$ for any  valid offset pair $\offsetpair$ \cite[Lemma 3.1]{pryce2014btf}.

Since $d_j-c_i=\sij{i}{j}$ holds on each HVT and hence implies $\sij{i}{j}>-\infty$, we have
\begin{equation*}
\~\Sess \subseteq \SpJac \subseteq \~S \qquad\text{for any offset pair $\offsetpair$}\;.
\end{equation*}
Our experience suggests that the (irreducible) BTF based on $\SpJac$ can be significantly finer than that based on $\~S$. We refer to the former BTF as {\sl fine BTF}, and to the latter as {\sl coarse BTF}. We refer to the diagonal blocks in the fine BTF as {\sl fine blocks}, and refer to those in the coarse BTF as {\sl coarse blocks}.

Assume that $\SpJac$ is permuted into a $p\times p$ BTF. Following this BTF, we apply the same permutations on $\Jac$ and $\Sig$, and write them  in $p\times p$ block forms:
\begin{align}\label{eq:pBypBlockForm}
\Jac = 
\begin{blockarray}{*5c}
\begin{block}{[*5{@{\hspace{8pt}}c}]}
 \Jblk{1}{1} &\Jblk{1}{2} &\cdots &\Jblk{1}{p} \\
 \~0 &\Jblk{2}{2} &\cdots &\Jblk{2}{p} \\
 \vdots      &\ddots    &\ddots  &\vdots \\
 \~0  &\cdots   &\~0   &\Jblk{p}{p} \\
\end{block}
\end{blockarray} 
\qaq
\Sig = 
\begin{blockarray}{*5c}
\begin{block}{[*5{@{\hspace{8pt}}c}]}
\Sig_{11}&\Sig_{12} &\cdots &\Sig_{1p} \\
\Sig_{21}      &\Sig_{22} &\cdots &\Sig_{2p} \\
\vdots      &\ddots       &\ddots  &\vdots \\
\Sig_{p1}     &\cdots      &\cdots   &\Sig_{pp}\\
\end{block}
\end{blockarray}
\;.
\end{align}
We call this procedure a {\sl block triangularization} of the DAE, and note that  the sparsity pattern $\~S$ of $\Sig$ may not be in the same BTF as $\SpJac$ of $\Jac$. That is, every $\sij{i}{j}$  below the block diagonal of $\Sig$ is not necessarily $-\infty$, but must satisfy $\sij{i}{j}<d_j-c_i$ as $\Jij{i}{j}\equiv 0$. Hence,
\begin{align}\label{eq:djciblk}
d_j - c_i
\left\{
\begin{alignedat}{7}
&>\sij{i}{j}     &\quad& \text{if $\blk{j} < \blk{i}$}\\
&\ge \sij{i}{j}  && \text{if $\blk{j} \ge \blk{i}$}\;.
\end{alignedat}
\right.
\end{align}
We refer the reader to \cite{NedialkovPryce2012b,pryce2014btf} for more details on BTFs.

\begin{example}\label{ex:mod2pend}
We illustrate the coarse and fine BTFs with the (artificially) modified double pendula DAE in \cite{nedialkov2016a}. The state variables are $x,y,\lam,u,v,\mu$; $G$ is gravity, $L>0$ is the length of both pendula, and $\alpha$ is  a constant. 
\begin{equation*}
\begin{alignedat}{5}
0 = f_1 &= x''+x\lam
&\qquad & 0 = f_4 &&= u''+u\mu \\
0 = f_2 &= y''+y\lam+(x')^3 -\pendG
&& 0 = f_5 &&= (v''')^3+v\mu-G\\
0 = f_3 &= x^2+y^2-\pendL^2 
&& 0 = f_6 &&= u^2+v^2-(L+\alpha\lambda)^{2}+\lambda''
\end{alignedat}
\end{equation*}
\begin{align*}
\Sig=
\begin{blockarray}{r@{\hspace{1mm}}*6{c@{\hspace{2mm}}}c@{\hspace{2mm}}}
       &  v\phanbull &   \mu &  u &  x &  y &  \lam & \s{c_i} \\
\begin{block}{r@{\hspace{1mm}}[*6{c@{\hspace{2mm}}}]c}
f_5\;\;\;\; & 3^\bullet  & \lbar{0\phanbull}  &  & \dbar &   &    &   \s0 \\\cline{2-3}
f_4\;\;\;\; &    & \lbar{\ 0^\bullet\ }   & \lbar{2\phanbull} &  \dbar &  &    &   \s0\\ \cline{3-4}
f_6\;\;\;\; & \OK{0}\phanbull &  &\lbar{\ 0^\bullet\ }    & \dbar &  &   2\phanbull & \s2\\ \cline{2-7}
f_3\;\;\;\; &    &   &    &\lbar{0^\bullet} & \; 0\phanbull &   &   \s6\\
f_2\;\;\;\; &    &   &      & \lbar{\OK{1}\phanbull} &\; 2^\bullet   & 0\phanbull & \s4\\
f_1\;\;\;\; &    &   &    &\lbar{2\phanbull} & &0^\bullet    & \s4\\
\end{block}
\s{d_j}& \s3 &\s0 &\s2 &\s6 &\s6 &\s4 \\[-5ex]
\end{blockarray}
\quad
\Jac=
\begin{blockarray}{r@{\hspace{2mm}}*6{c@{\hspace{1mm}}}}
\\
 &  v''' &   \phanbull \mu &  \phanbull u'' &  \phanbull x^{(6)} &  y^{(6)} &  \lam^{(4)} \\
\begin{block}{l@{\hspace{2mm}}[*6{c@{\hspace{1mm}}}]}
f_5 & 2v''' & \lbar{v}   &  &&   & \\\cline{2-3}
f_4 &  & \lbar{u} & \lbar{1} & &   &
\\   \cline{3-4}
f''_6 &  &   &\lbar{2u}   & \dbar  & &  1
\\ \cline{4-7}
f_3^{(6)} &&&& \lbar{2x} & 2y & \\
f_2^{(4)} &&&& \dbar  & 1 & y\\
f_1^{(4)} &&&& \lbar{1} &   & x\\
\end{block}
\end{blockarray}
\end{align*}
The row and column labels in $\Jac$, showing equations and variables differentiated to order $c_i$ and $d_j$, aim to remind the reader of the formula for $\Jac$ in \defterm{\sysjacEqnum}.

There are two $3\times 3$ coarse blocks. The first one, comprising equations $f_5, f_4, f_6$ and variables $v, \mu, u$, can further decompose into three $1\times 1$ fine blocks, while the second coarse block, comprising equations $f_3, f_2, f_1$ and variables $x,y,\lam$, is irreducible. Hence there are four blocks in the fine BTF.

The sparsity pattern $\SpJac$ of $\~J$ is exactly the one in \FGref{btfA}(a), so the fine BTF information is in \FGref{btfA}(b).
\end{example}

If we state \LEref{BTF4.1} in the context of a Jacobian sparsity pattern, then we have the following lemma.
\begin{lemma}\label{le:unionHVT}\cite[Lemma 3.3]{pryce2014btf}
Assume that a Jacobian sparsity pattern $\SpJac$ is in some BTF. Let $(\Sig_{\idq m})_{\idq,m=\rnge{1}{p}}$ be the corresponding sub-matrices of $\Sig$. Then a HVT $T$ of $\Sig$ is the union of HVTs $T_q$ of the diagonal blocks $\Sig_{\idq\idq}$: $T=T_1\cup\cdots\cup T_p$.
\end{lemma}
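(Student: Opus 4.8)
The plan is to build $T$ as a HVT of $\Sig$ by combining HVTs of the diagonal blocks $\Sig_{\idq\idq}$, and then to argue via an extremality/value argument that no other construction is possible. First I would invoke \LEref{BTF4.1} applied to the Jacobian sparsity pattern $\SpJac$: since $\SpJac$ is in the given BTF, any transversal of $\SpJac$ — in particular any HVT $T$ of $\Sig$, which as noted right after \rf{spJac} is a transversal of $\SpJac$ — is contained in the union $\Jblk{1}{1}\cup\cdots\cup\Jblk{p}{p}$ of diagonal blocks. Consequently $T_\idq := T\cap\Jblk{\idq}{\idq}$ has exactly one entry in each row and column of block $\idq$, i.e.\ $T_\idq$ is a transversal of $\Jblk{\idq}{\idq}$, and $T = T_1\cup\cdots\cup T_p$ as a disjoint union.

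The substantive step is to show each $T_\idq$ is a \emph{highest-value} transversal of $\Sig_{\idq\idq}$, not merely some transversal. For this I would use the additivity of the transversal value over a disjoint block decomposition: for the transversal $T$ of $\Sig$ we have $\val{T} = \sum_{(i,j)\in T}\sij{i}{j} = \sum_{\idq=1}^p \sum_{(i,j)\in T_\idq}\sij{i}{j} = \sum_{\idq=1}^p \val{T_\idq}$, where $\val{T_\idq}$ denotes the value of $T_\idq$ as a transversal of $\Sig_{\idq\idq}$. Now suppose some $T_\idq$ were not a HVT of $\Sig_{\idq\idq}$; then there is a transversal $T_\idq'$ of $\Sig_{\idq\idq}$ with strictly larger value. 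Replacing $T_\idq$ by $T_\idq'$ in the union gives a set $T' = T_1\cup\cdots\cup T_\idq'\cup\cdots\cup T_p$ which, because the blocks occupy disjoint row sets and disjoint column sets, again has exactly one entry per row and per column of $\Sig$, hence is a transversal of $\Sig$ with $\val{T'} = \val{T} - \val{T_\idq} + \val{T_\idq'} > \val{T} = \val{\Sig}$, contradicting the maximality of $\val\Sig$. Therefore every $T_\idq$ is a HVT of $\Sig_{\idq\idq}$, which is exactly the claimed decomposition $T=T_1\cup\cdots\cup T_p$ with each $T_\idq$ a HVT of the corresponding diagonal block.

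The main obstacle is a subtle point that must be handled carefully: it is not automatic that a HVT of $\Sig_{\idq\idq}$ uses only finite $\sij{i}{j}$ entries, nor that the "transversal built by gluing blockwise HVTs" has finite value — one must know the DAE is SWP (so $\val\Sig$ is finite) and that below-diagonal entries of $\Sig$, even when finite, cannot be used (by \rf{djciblk}, but more to the point \LEref{BTF4.1} forbids them in any transversal of $\SpJac$). The clean way around this is precisely to route everything through the Jacobian sparsity pattern $\SpJac$ rather than through $\~S$: \LEref{BTF4.1} guarantees the transversals live in the diagonal blocks of $\SpJac$, and on $\SpJac$ the relation $d_j-c_i=\sij{i}{j}$ holds, so all the $\sij{i}{j}$ appearing are finite and the value bookkeeping above is valid. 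A final remark worth including is that the argument also shows the converse direction implicitly used elsewhere: gluing together \emph{any} choice of HVTs $T_\idq$ of the blocks produces a HVT of $\Sig$, since its value is $\sum_\idq \val{T_\idq}$, the maximum of the per-block values, which equals $\val\Sig$.
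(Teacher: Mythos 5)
Your proposal is correct and follows essentially the route the paper intends: it cites \cite[Lemma 3.3]{pryce2014btf} and remarks only that the result follows because a transversal of $\SpJac$ decomposes into transversals of its diagonal blocks (Lemma~\ref{le:BTF4.1}), which is exactly your first step, with the blockwise value-additivity and exchange argument supplying the standard maximality details. Your care in routing the finiteness bookkeeping through $\SpJac$ (where $\sij{i}{j}=d_j-c_i$) is precisely what makes that sketch rigorous, so no gap remains.
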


This lemma is not difficult to prove, given that a transversal $T$ of   $\SpJac$ is the union of transversals $T_\idq$ of the diagonal blocks of $\SpJac$.

The following lemma is useful for proving the main Theorems~\thref{LCblk} and \thref{ESblk} of the block conversion methods in \SCref{convBTF}.
\begin{lemma}\label{le:blockstructure}
Assume that $\Sig$ has a finite $\val\Sig$ and is in a $p\times p$ block form as in \rf{pBypBlockForm}.
Let $\~c$ and $\~d$ be two nonnegative integer $n$-vectors. Assume also that
\begin{enumerate}[(a)]
\item $d_j-c_i>\sij{i}{j}$ holds for all entries below the diagonal blocks of $\Sig$, 
\item $d_j-c_i\ge \sij{i}{j}$ holds elsewhere, and 
\item $\val\Sig=\sum_j d_j - \sum_i c_i$.
\end{enumerate}
Then 
\begin{enumerate}[(i)]
\item $\offsetpair$ is a valid offset pair of $\Sig$,
\item the block form of $\Sig$ is a BTF of the Jacobian sparsity pattern $\SpJac$, and
\item a HVT of $\Sig$ is the union of HVTs $T_q$ of the diagonal blocks $\Sig_{\idq\idq}$, for all $\idq=\rnge{1}{p}$.
\end{enumerate}
\end{lemma}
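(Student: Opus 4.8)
The plan is to peel off the three claims in the order $(i)\Rightarrow(ii)\Rightarrow(iii)$, because each one supplies what the next needs, and to establish up front the single substantive fact: hypotheses (a)--(c) force every HVT of $\Sig$ onto the equality set $\{(i,j):d_j-c_i=\sij{i}{j}\}$, and in particular keep it out of the region strictly below the diagonal blocks.

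First I would prove that fact. Since $\val\Sig$ is finite, $\Sig$ has a HVT $T$. For each $(i,j)\in T$, hypotheses (a) and (b) give $\sij{i}{j}\le d_j-c_i$, with strict inequality if $(i,j)$ lies below the diagonal blocks. Summing over the $n$ entries of the transversal $T$ and using (c),
\[
\val\Sig=\sum_{(i,j)\in T}\sij{i}{j}\ \le\ \sum_j d_j-\sum_i c_i\ =\ \val\Sig ,
\]
so equality holds entry by entry: $d_j-c_i=\sij{i}{j}$ on all of $T$, and no position of $T$ lies strictly below the diagonal blocks (otherwise (a) would force a strict inequality). Hence $T\subseteq\SpJac$, so $\SpJac$ is structurally nonsingular.

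Part (i), that $\offsetpair$ is a valid offset pair of $\Sig$, is then just a rereading of the hypotheses: (a) and (b) together give $d_j-c_i\ge\sij{i}{j}$ for all $(i,j)$; the vectors $\~c,\~d$ are nonnegative integers by assumption; and (c) (equivalently, the displayed identity $d_j-c_i=\sij{i}{j}$ on the HVT $T$) is the remaining requirement in the definition of a valid offset pair. For part (ii): hypothesis (a) says no entry of $\SpJac$ lies strictly below the diagonal blocks of the given $p\times p$ block form, so $\SpJac$ is block upper triangular for that partition; each diagonal block is square of positive size by hypothesis; and, applying \LEref{BTF4.1} to the transversal $T\subseteq\SpJac$, the intersection of $T$ with block $\idq$ is a transversal of the $\idq$th diagonal block of $\SpJac$, so each diagonal block is structurally nonsingular. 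Therefore the block form of $\Sig$ is a genuine BTF of $\SpJac$. Part (iii) now follows by invoking \LEref{unionHVT}: since $\SpJac$ is in the BTF determined by the block structure of $\Sig$, any HVT of $\Sig$ is the union of HVTs $T_\idq$ of the diagonal blocks $\Sig_{\idq\idq}$ for $\idq=\rnge{1}{p}$.

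I do not expect a real obstacle; once the preliminary fact is in place the rest is bookkeeping with the structural-analysis definitions. The one step needing care is exactly that preliminary fact — that a HVT of $\Sig$ cannot dip below the diagonal blocks — since without it one obtains only a block partition of $\SpJac$, not an honest BTF, and then \LEref{unionHVT} would be unavailable for part (iii).
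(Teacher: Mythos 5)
Your proof is correct and follows essentially the same route as the paper's: the summation of $d_j-c_i\ge\sij{i}{j}$ over a HVT combined with hypothesis (c) to get equality on the HVT (giving (i)), hypothesis (a) to empty the below-diagonal blocks of $\SpJac$ (giving (ii)), and Lemma~\ref{le:unionHVT} for (iii). Your extra remarks that $T\subseteq\SpJac$ makes $\SpJac$ structurally nonsingular and that its diagonal blocks inherit transversals via Lemma~\ref{le:BTF4.1} are a harmless refinement of the same argument, not a different approach.
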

\begin{proof}
{\em (i)} We let $T$ denote a HVT of $\Sig$. Since $\val\Sig$ is finite, $\sij{i}{j}\ge0$ for all $(i,j)\in T$.
For $\offsetpair$ to be a valid offset of $\Sig$, $d_j-c_i\ge\sij{i}{j}$ must hold for all $i,j=\rnge{1}{n}$, with equalities for all $(i,j)\in T$ \cite{pryce2001a}.

By (a) and (b), $d_j-c_i\ge \sij{i}{j}$ holds everywhere. Summing these inequalities over $T$ gives 
\[
\sum_{(i,j)\in T} (d_j-c_i)  \ge \sum_{(i,j)\in T} \sij{i}{j}\;.
\]
The left-hand side equals $\sum_j d_j-\sum_i c_i$, and the right-hand side equals $\val\Sig$ by definition. By (c), these two values are equal,  so  $d_j-c_i=\sij{i}{j}$ holds for all $(i,j)\in T$, and $\offsetpair$ is valid for $\Sig$.

{\em (ii)} By (a), the blocks below the block diagonal in $\SpJac$, derived from $\Sig$ and $\offsetpair$ using \rf{spJac}, are empty. By the definition of a BTF of a Jacobian sparsity pattern, $\SpJac$ is in a BTF as described by the $p\times p$ block form.

{\em (iii)} This follows immediately from {\em (ii)} and \LEref{unionHVT}. \qed
\end{proof}

Following a $p\times p$ BTF based on $\SpJac$, we can write any valid offset pair $\offsetpair$ of $\Sig$ in a block form as
\begin{align}\label{eq:offsetblockform}
(\~c_1; \~d_1), (\~c_2; \~d_2), \ldots, (\~c_p; \~d_p)\;,
\end{align}
where each of the sub-vectors $\~c_\idq$ and $\~d_\idq$ is of length $N_\idq$, where $\idq=\rnge{1}{p}$. 

\pagebreak

\begin{lemma}\label{le:BTF4.2}
Assume that a Jacobian pattern $\SpJac$, derived by $\Sig$ and a valid offset pair $\offsetpair$, is in some BTF. If we write $\offsetpair$ into block form as in \rf{offsetblockform}, then $(\~c_\idq;\~d_\idq)$ is a valid offset pair of $\Sig_{\idq\idq}$.
\end{lemma}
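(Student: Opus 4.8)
The plan is to verify the two defining conditions of a valid offset pair for the small system $\Sig_{\idq\idq}$, getting each one by restricting the corresponding fact already known for the full $\Sig$ and $\offsetpair$. Recall that validity of $\offsetpair$ for $\Sig$ means: $\~c,\~d$ are nonnegative integer $n$-vectors; $d_j-c_i\ge\sij{i}{j}$ holds for all $i,j\in\rnge{1}{n}$; and $d_j-c_i=\sij{i}{j}$ holds on some HVT $T$ of $\Sig$ (with $\val\Sig$ finite, since a HVT carrying equalities to finite offsets must consist of finite $\sij{i}{j}$'s).

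First I would dispose of the easy parts. The components of $\~c_\idq$ and $\~d_\idq$ are among those of $\~c$ and $\~d$, hence nonnegative integers; and since $B_\idq\subseteq\rnge{1}{n}$ indexes exactly the rows and columns of the submatrix $\Sig_{\idq\idq}$, the inequality $d_j-c_i\ge\sij{i}{j}$, assumed for all pairs in $\rnge{1}{n}$, restricts to every entry of $\Sig_{\idq\idq}$.

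The only step that genuinely uses the block structure is exhibiting a HVT of $\Sig_{\idq\idq}$ that carries the equalities. Here I would invoke \LEref{unionHVT}: since $\SpJac$ is in the assumed BTF, a HVT $T$ of $\Sig$ splits as $T=T_1\cup\cdots\cup T_p$ with each $T_\idq$ a HVT of $\Sig_{\idq\idq}$ (of finite value, being contained in the finite-valued $T$, so in particular $\Sig_{\idq\idq}$ is structurally nonsingular). Because $T_\idq\subseteq T$ and $\offsetpair$ is valid for $\Sig$, the equality $d_j-c_i=\sij{i}{j}$ holding throughout $T$ holds in particular throughout $T_\idq$. Combined with the previous paragraph, this is precisely the assertion that $(\~c_\idq;\~d_\idq)$ is a valid offset pair of $\Sig_{\idq\idq}$; equivalently, summing these equalities over $T_\idq$ gives $\sum_{j\in B_\idq}d_j-\sum_{i\in B_\idq}c_i=\val{\Sig_{\idq\idq}}$, the standard value characterization of validity.

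I do not expect a real obstacle here: the argument is bookkeeping layered on \LEref{unionHVT}. The one point needing care is the indexing convention — the rows and columns of $\Sig_{\idq\idq}$ and the components of $\~c_\idq,\~d_\idq$ must be labelled consistently with the original (permuted) index set $B_\idq$, so that ``$d_j-c_i=\sij{i}{j}$ on $T_\idq$'' is read with the same labels in the block and in the full system. Once that identification is fixed, the three steps above finish the proof.
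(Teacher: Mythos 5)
Your proof is correct and follows essentially the same route as the paper's: both invoke Lemma~\ref{le:unionHVT} to obtain that the restriction $T_\idq$ of a HVT $T$ of $\Sig$ is a HVT of $\Sig_{\idq\idq}$, and then simply restrict the validity conditions (nonnegativity, $d_j-c_i\ge\sij{i}{j}$, and equality on the transversal) from $\Sig$ to the block. No gap.
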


\begin{proof}
Let $T$ be a HVT of $\Sig$. By \LEref{unionHVT}, the intersection of $T$ with block $\idq$ is a HVT $T_\idq$ of $\Sig_{\idq\idq}$. Then $d_j-c_i=\sij{i}{j}$ holds for all  $(i,j)\in T_\idq\subseteq T$. Since $\offsetpair$ is valid for $\Sig$, $d_j-c_i\ge\sij{i}{j}$ and $c_i\ge0$ hold on $\Sig_{\idq\idq}$, where $i,j \in B_\idq$. Thus the offset pair $(\~c_\idq;\~d_\idq)$ matched to block $\idq$ satisfies the conditions \defterm{\cidjEqnum} for being valid for $\Sig_{\idq\idq}$. \qed
\end{proof}

From the view of \LEref{BTF4.2}, we can regard each diagonal block $\Sig_{\idq\idq}$ as a signature matrix in its own right. Equivalently, each block $\idq$, having $N_\idq$ equations in $N_\idq$ variables, can be viewed as a sub-DAE, with a signature matrix $\Sig_{\idq\idq}$, a finite $\val{\Sig_{\idq\idq}}$, a {\sl local offset pair} $(\~c_\idq; \~d_\idq)$, and a {\sl sub-Jacobian} $\Jblk{\idq}{\idq}$. Expressions that contribute to entries in an off-diagonal block $\Sig_{\idq m}$, where $\idq\neq m$, can be considered as driving terms, or equivalently, the influence of variables in block $m$ on those in block $\idq$.
We refer to $\offsetpair$ of $\Sig$ as a {\sl global offset pair}. The reader is referred to \cite{pryce2014btf} for more theoretical results about block triangularization and global/local offset pairs.

\section{Block conversion methods.}\label{sc:convBTF}

They are suitable for improving the efficiency of finding a useful conversion for fixing SA's failures. If $\Jac$ is identically singular, then by \rf{pBypBlockForm}, $\det(\Jac)=\prod_{\idq=1}^p \det(\Jac_{\idq\idq})\equiv 0$, so at least one $\Jac_{\idq\idq}$ for some $\idq\in\rnge{1}{p}$ is identically singular. As discussed before, we can regard block $\idq$ as a sub-DAE with a signature matrix $\Sig_{\idq\idq}$. Then we wish to apply the basic conversion methods on this sub-DAE to achieve a strict decrease in $
\val{\Sig_{\idq\idq}}$, provided the conditions for applying these methods are satisfied for those variables and equations within block $\idq$.

However, what we should ensure is a strict decrease in the value of the {\em whole} signature matrix, namely $\val\newSig<\val\Sig$, where $\newSig$ is the signature matrix of the resulting DAE. Proving this inequality  from a decrease in $\val{\Sig_{\idq\idq}}$ is nontrivial, because a conversion on block $\idq$ may affect blocks $\Sig_{\idq m}$ for $m=1,\ldots,\idq-1,\idq+1,\ldots,p$. Especially in the ES method, $\Sig_{\idq\idq}$ and these blocks are enlarged. Hence, the conditions and the conversion process need to be carefully modified, so that the conversion methods can adapt to a BTF based on $\SpJac$.

We give an introductory example in \SSCref{intro}, present the block LC method in \SSCref{LCBTF}, and present the block ES method in  \SSCref{ESBTF}.

Hereafter we use the fine BTF in the examples for demonstration, since each fine block contains an irreducible sub-Jacobian sparsity pattern. Our experience suggests that a useful conversion can usually be derived from the fine BTF of a DAE.  However, we emphasize that the block conversion methods can be applied not only  to the irreducible BTF of a Jacobian sparsity pattern $\SpJac$ with some valid $\offsetpair$, but also to any BTF of $\SpJac$. For example, the basic conversion methods consider a DAE in a (trivial) BTF of one $n\times n$ block.

\subsection{An introductory example.}\label{ssc:intro}
We illustrate these block methods with the following DAE:
\begin{equation}\label{eq:blockLC1}
\begin{aligned}
0 &= f_1 = x_1+x_2 + h_1(t)\\
0 &= f_2 = x_1 + (x'_1+x'_2)x'_3 + h_2(t) \\
0 &= f_3 = x'_3 + h_3(t)\;.
\end{aligned}
\end{equation}
\begin{align*}
\Sig = 
\begin{blockarray}{rc@{\hspace{2mm}}c@{\hspace{2mm}}c@{\hspace{10pt}}cc}
&  x_1\phanbull &   x_2\phanbull & x_3  & \s{c_i} \\
\begin{block}{r @{\hspace{10pt}}[c@{\hspace{2mm}}c|@{\hspace{2mm}}c@{\hspace{2mm}}]cc}
f_1 & 0^\bullet  &0\phanbull  & \phantom{-}  &   \s1  \\
f_2 & 1\phanbull  &1^\bullet  & 1\phanbull &   \s0  \\ \cline{2-4}
f_3 &  & & 1^\bullet & \s0  \\
\end{block}
\s{d_j}& \s1\phanbull &\s1\phanbull &\s1
\end{blockarray}
\qquad
\Jac = 
\begin{blockarray}{rc@{\hspace{2mm}}cc@{\hspace{10pt}}cc}
&  x'_1\phanbull &  x'_2 & x'_3  &  \\
\begin{block}{r @{\hspace{10pt}}[c@{\hspace{2mm}}c|c@{\hspace{2mm}}]cc}
f'_1 & 1  &1  & &     \\
f_2 & x'_3  &x'_3  & x_1'+x_2' &    \\ \cline{2-4}
f_3 &  &  & 1 &   \\
\end{block}
&
\end{blockarray}
\end{align*}
(Here $h_1$, $h_2$, $h_3$ are driving functions.) The coarse BTF and the fine BTF are identical, both having two diagonal blocks.

In the basic LC method, we can choose $\vecu=[-x'_3,1,-x'_1-x'_2]^T\in\coker\Jac$. Using \defterm{\LCdefEqnum}, we have
\begin{align*}
\eqsetI = \setbg{i\mid u_i\nequiv 0} = \setbg{1,2,3},
\quad 
\underC=\min_{i\in\eqsetI} c_i = 0, \quad
\nzset = \setbg{\indxk\in\eqsetI\mid c_\indxk=\underC}=\setbg{2,3}\;.
\end{align*}
We let $\hoder{x_j}{\vecu}$ denote the order of the highest derivative to which $x_j$ occurs in $\vecu$, or $-\infty$ if $x_j$ does not occur in $\vecu$ \defterm{\truehodEqnum}. The LC condition \defterm{\LCcondEqnum} is violated since
\[
\hoder{x_j}{\vecu}=1\not< 1=d_j-\underC \quad \text{for all}\quad j=\rnge{1}{3}\;. 
\]
Not surprisingly, replacing either  $f_2$ or $f_3$ by
\[
\newf = \sum_{i\in \eqsetI} u_i f_i^{(c_i-\underC)} 
= -x'_3 h'_1(t) + \bigl(x_1 + h_2(t)\bigr) - (x'_1+x'_2)\bigl(x'_3 + h_3(t)\bigr)
\]
does not result in a decrease in $\val\Sig$; verifying this is not difficult.

Notice that only the sub-Jacobian of block 1, $\Jblk{1}{1}=\ppin{(f'_1,f_2)}{(x'_1,x'_2)}$, is singular. Suppose we consider block 1, with $B_1=\setbg{1,2}$, as a sub-DAE, and choose $\vecu=[-x'_3,1]^T \in\coker{\Jblk{1}{1}}$. Within block 1, the LC method derives
\begin{align*}
\eqsetI = \setbg{i\in B_1 \mid u_i\nequiv 0} = \setbg{1,2},
\quad 
\underC=\min_{i\in\eqsetI} c_i = 0, 
\quad
\nzset = \setbg{\indxk\in\eqsetI\mid c_\indxk=\underC}=\setbg{2}\;.
\end{align*}
Now the LC condition \defterm{\LCcondEqnum} is satisfied for the column indices in block 1:
\[
\hoder{x_j}{\vecu}=-\infty<d_j-\underC \quad\text{for $j=1,2\in B_1$}\;.
\] 
Replacing $f_2$ by $\newf_2 = u_1 f'_1 + u_2 f_2= x_1 + h_2(t) - x'_3 h'_1(t)$ results in the DAE with the following SA result.

\begin{equation*}
\newSig = 
\begin{blockarray}{r@{\hspace{3mm}}ccccc}
&  x_2\phanbull &   x_1 & x_3  & \s{c_i} \\
\begin{block}{r @{\hspace{10pt}}[ccc]cc}
f_1 & 0^\bullet  &\lbar{0\phanbull}  & \phantom{-}  &   \s0  \\ \cline{2-3}
\newf_2 &   &\lbar{0^\bullet}  & \lbar{1\phanbull} &   \s0  \\ \cline{3-4}
f_3 &  & & \lbar{1^\bullet} & \s0  \\
\end{block}
\s{d_j}& \s0\phanbull &\s0\phanbull &\s1
\end{blockarray}
\qquad
\newJac = 
\begin{blockarray}{r@{\hspace{3mm}}ccc@{\hspace{10pt}}cc}
&  x_2\phanbull &  x_1 & \phanbull  x'_3 \phanbull  &  \\
\begin{block}{r @{\hspace{10pt}}[ccc]@{\hspace{2mm}}cc}
f_1 & 1  &\lbar1  & &     \\ \cline{2-3}
\newf_2 &  &\lbar{1}  & \lbar{g'_1(t)\phanbull} &    \\ \cline{3-4}
f_3 &  &  & \lbar1 &   \\
\end{block}
&
\end{blockarray}
\end{equation*}
The SA succeeds as $\newJ$ is nonsingular. The conversion results in a decrease in the value of the signature matrix: $\val\newSig=1<2=\val\Sig$.

\medskip

The basic ES method can work on \rf{blockLC1} by choosing $\vecv=[1,-1,0]^T\in\ker{\Jac}$. It is simpler---though trivial for this example---to work on block 1 only. We find $\vecv=[1,-1]^T\in\ker{\Jblk{1}{1}}$, and use \defterm{\ESdefEqnum} to derive
\begin{align*}
\nzsetES=\setbg{ \indxk\in B_1  \mid v_\indxk\nequiv 0} = \setbg{1,2},
\quad 
s=|\nzsetES|=2, 
\quad
\eqsetIES=\setbg{1,2}, 
\quad 
\overC=\max_{i\in\eqsetIES} c_i=1\;.
\end{align*}
Since $\vecv$ is constant, it is not difficult to verify that the ES conditions \defterm{\EScondEqnum} hold.

We choose $\indxk=2\in\nzsetES$ and introduce for $x_1$ a new variable 
\[
y_1=x_1^{(d_1-\overC)} - \frac{v_1}{v_2} \cdot x_2^{(d_2-\overC)} = x_1+x_2\;.
\]
The ES method hence says: replace $x_1$ by $y_1-x_2$ in $f_1$, and replace $x'_1$ by $y'_1-x'_2$ in $f_2$.
Finally we append the equation $g_1$ that prescribes such replacements, and obtain
\begin{equation*}
\begin{alignedat}{20}
0 &= \newf_1 &&= y_1 + h_1(t)                 &\qquad& 0 &&= \newf_3 &&= x'_3 +  h_3(t) \\
0 &= \newf_2 &&= x_1+y'_1 x'_3 + h_2(t) && 0 &&= g_1 &&= -y_1+x_1+x_2\;.
\end{alignedat}
\end{equation*}
\begin{equation*}
\newSig = 
\begin{blockarray}{r@{\hspace{3mm}}cccccc}
&  x_2\phanbull &   x_1 & y_1 & x_3  & \s{c_i} \\
\begin{block}{r @{\hspace{10pt}}[cccc]cc}
g_1 & 0^\bullet  &\lbar{0\phanbull}  & 0\phanbull  & &   \s0  \\ \cline{2-3}
\newf_2 &   &\lbar{0^\bullet}  & \lbar{1\phanbull} & 1\phanbull &  \s0  \\ \cline{3-4}
\newf_1 &   &  & \lbar{0^\bullet} & \dbar &   \s1  \\ \cline{4-5}
f_3 &  & &  & \lbar{1^\bullet} & \s0  \\
\end{block}
\s{d_j}& \s0\phanbull &\s0\phanbull &\s1\phanbull &\s1\phanbull
\end{blockarray}
\qquad
\newJac = 
\begin{blockarray}{r@{\hspace{3mm}}ccccc}
&  x_2\phanbull &   x_1 & y'_1 & x'_3  \\
\begin{block}{r @{\hspace{10pt}}[cccc]c}
g_1 & 1  &\lbar{1}  & -1  &\\ \cline{2-3}
\newf_2 &   &\lbar{1}  & \lbar{x'_3} & y'_1 \\ \cline{3-4}
\newf'_1 &   &  & \lbar{1} & \dbar  \\ \cline{4-5}
f_3 &  & &  & \lbar{1}  \\
\end{block}
&
\end{blockarray}
\end{equation*}
Again $\val\newSig=1<2=\val\Sig$, and the SA succeeds as $\det(\newJ)=1$.

\subsection{Block linear combination method.}\label{ssc:LCBTF}
 
We first introduce some convenient notation for the block LC method. Let $\~0_{\idr}$ denote the zero column vector of size $\idr$. Assume that a $\Jblk{\idq}{\idq}$ is identically singular. Let $\~\hu\in\coker{\Jblk{\idq}{\idq}}$, where $\~\hu\nequiv \~0_{N_\idq}$.  Let also 
\[
\vecu = \begin{bmatrix}
\~0_{N_1+\cdots+N_{\idq-1}} \\ 
\~\hu \\ 
\~0_{N_{\idq+1}+\cdots+N_p}
\end{bmatrix}\;.
\]

Denote 
\begin{equation}\label{eq:LCblkdef}
\begin{aligned}
\eqsetI &= \setbg{i \mid  u_i\nequiv 0}\subseteq B_\idq, \quad
\LCconst = \min_{i \in \eqsetI} c_{i}, 
\\ 
\nzset &= \setbg{\indxk\in \eqsetI \mid c_\indxk=\LCconst },
\qaq
\newL = \setbg{ l \in\nzset \mid \text{$u_l$ is (nonzero) constant}}\;.
\end{aligned}
\end{equation}
The set $\newL$ is used to seek a conversion that guarantees equivalence between the original DAE and the converted one. The block LC method is based on the following theorem.
\begin{theorem}\label{th:LCblk}
If 
\begin{equation}\label{eq:LCblkcond}
\hoder{x_j}{\vecu}< d_j-\LCconst \qquad \text{for all $j\in B_{\idq}$}
\end{equation}
and we replace an equation $f_\indxk$, $\indxk\in\nzset$, by
\begin{align*}
\newf = \sum_{i\in \eqsetI} u_i f_i^{\left(c_i-\LCconst\right)}\;,
\end{align*}
then $\val{\newSig} < \val{\Sig}$, where $\newSig=(\newsij{i}{j})$ is the signature matrix of the resulting DAE.
\end{theorem}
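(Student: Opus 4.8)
The plan is to reduce Theorem~\ref{th:LCblk} to the already-established basic LC method (i.e.\ \defterm{\LCcondEqnum} and its accompanying result in \cite{tgn2015c}), by checking that the global vector $\vecu$ and the global offset pair $\offsetpair$ satisfy the hypotheses of the basic theorem. First I would observe that $\vecu\in\coker\Jac$: since $\Jac$ is block upper triangular as in \rf{pBypBlockForm}, $\vecu^T\Jac$ has $\idq$th block $\~\hu^T\Jblk{\idq}{\idq}\equiv\~0$ (as $\~\hu\in\coker{\Jblk{\idq}{\idq}}$), and blocks to the left of $\idq$ vanish because $\vecu$ is zero there and $\Jac$ is lower-block-zero, while blocks to the right also vanish because the $\idq$th row-block of $\Jac$ meets only the submatrices $\Jblk{\idq}{m}$ with $m\ge\idq$ and $\vecu^T$ restricted to rows in $B_\idq$ annihilates $\Jblk{\idq}{\idq}$ but \emph{not} necessarily $\Jblk{\idq}{m}$ for $m>\idq$. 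This last point is the subtlety: $\vecu$ need not lie in $\coker\Jac$ globally. So the correct strategy is \emph{not} to invoke the basic theorem verbatim but to re-run its proof with the block structure in hand.

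Here is the approach in steps. (1) Set $\newf=\sum_{i\in\eqsetI}u_if_i^{(c_i-\LCconst)}$ and form $\newSig$ from $\Sig$ by replacing row $\indxk$. For $j\notin B_\idq$, use \rf{djciblk}: entries of $\Sig$ in row $\indxk\in B_\idq$ and column $j$ with $\blk{j}<\idq$ satisfy $\sij{\indxk}{j}<d_j-c_\indxk$, and the same holds for every $i\in\eqsetI\subseteq B_\idq$, so each term $f_i^{(c_i-\LCconst)}$ contributes to column $j$ a derivative of $x_j$ of order at most $\sij{i}{j}+(c_i-\LCconst)<d_j-\LCconst$; hence $\newsij{\indxk}{j}<d_j-\LCconst$ for columns strictly left of block $\idq$, and $\newsij{\indxk}{j}\le d_j-\LCconst$ in general. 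For $j\in B_\idq$, hypothesis \rf{LCblkcond} is exactly what forces $\newsij{\indxk}{j}=\hoder{x_j}{\vecu}<d_j-\LCconst$. (2) Define a modified offset $\newc$ by $\newc_\indxk=\LCconst$ and $\newc_i=c_i$ for $i\ne\indxk$, keeping $\~d$ unchanged. The computations in step~(1), together with validity of $\offsetpair$ for $\Sig$ and $\LEref{BTF4.2}$, show $d_j-\newc_i\ge\newsij{i}{j}$ for all $i,j$, so $(\newc;\~d)$ is a valid offset pair for $\newSig$ — but crucially with a \emph{strict} inequality somewhere along every HVT of $\newSig$ through row $\indxk$, because in block $\idq$ no transversal entry of $\newSig_{\idq\idq}$ in row $\indxk$ can meet equality (by \rf{LCblkcond} the row-$\indxk$ entries are all $<d_j-\LCconst$). (3) Conclude: $\val\newSig\le\sum_j d_j-\sum_i\newc_i=\big(\sum_j d_j-\sum_i c_i\big)-(c_\indxk-\LCconst)=\val\Sig-(c_\indxk-\LCconst)\le\val\Sig$; and if $c_\indxk=\LCconst$ the strict-inequality-on-the-HVT argument from step~(2) gives $\val\newSig<\sum_j d_j-\sum_i\newc_i=\val\Sig$. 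Either way $\val\newSig<\val\Sig$, except one must rule out the degenerate case — and here one appeals to the same reasoning as in \cite{tgn2015c}: replacing $f_\indxk$ genuinely lowers the row-$\indxk$ high-order structure, so a HVT of $\newSig$ cannot recover the lost value.

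The main obstacle I expect is precisely the bookkeeping in step~(2)–(3): showing that the drop in $\val\Sig$ is strict even when $c_\indxk>\LCconst$. In that sub-case $\newc_\indxk=\LCconst<c_\indxk$ already lowers $\sum_i\newc_i$, so $\sum_j d_j-\sum_i\newc_i$ strictly exceeds... no — it exceeds $\val\Sig$, which is the wrong direction; the point is that $(\newc;\~d)$ is valid but \emph{not HVT-tight} for $\newSig$, i.e.\ the true $\val\newSig$ is strictly below this upper bound, and one must quantify the gap as at least $c_\indxk-\LCconst$. The clean way is to note that any HVT $\newT$ of $\newSig$ restricted to rows $\ne\indxk$ is a partial transversal of the unchanged part of $\Sig$, so $\sum_{(i,j)\in\newT,\,i\ne\indxk}\sij{i}{j}\le\val\Sig-\max_j\sij{\indxk}{j}$ along the old HVT, while the single entry $\newsij{\indxk}{j}$ is bounded by \rf{LCblkcond}; assembling these reproduces the inequality. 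This is the one place where I would actually reproduce the argument of \cite[Theorem for \LCdefEqnum]{tgn2015c} rather than cite it, adapting it so that the off-block columns $j\notin B_\idq$ are handled by \rf{djciblk} rather than by the global cokernel condition. Everything else — that $\eqsetI\subseteq B_\idq$, that $\newL$ plays no role in the value statement (only in the separate equivalence claim), and that $\val\newSig$ remains finite — is routine given Lemmas~\ref{le:unionHVT}, \ref{le:blockstructure}, and \ref{le:BTF4.2}.
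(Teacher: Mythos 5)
Your outline secures the easy half, $\val\newSig\le\val\Sig$ (essentially the content of the paper's Lemma~\leref{LCsigma}), but the strictness step has a genuine hole. Your step~(2) claims a strict inequality "somewhere along every HVT of $\newSig$ through row $\indxk$", justified only for entries of row $\indxk$ lying in the columns of block $\idq$ (or to its left). For columns $j$ with $\blk{j}>\idq$ you only have $\newsij{\indxk}{j}\le d_j-c_\indxk$, and equality genuinely can occur there: in the robot-arm example the new row $\overline C$ has $\sigma(x_1,\overline C)=2>1=\sigma(x_1,C)$ and $2=d_{x_1}-c_{\indxk}$ with the old offsets, so a transversal of $\newSig$ pairing row $\indxk$ with such a column is tight and your per-entry argument proves nothing. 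The "clean way" you sketch to repair this rests on the inequality $\sum_{(i,j)\in\newT,\,i\ne\indxk}\sij{i}{j}\le\val\Sig-\max_j\sij{\indxk}{j}$, which is false in general (extending the partial transversal gives at best $\val\Sig-\sij{\indxk}{j_0}$ for the specific column $j_0$ paired with $\indxk$, and what must be compared is $\newsij{\indxk}{j_0}$ against $d_{j_0}-c_\indxk$, not against $\sij{\indxk}{j_0}$, which the new row may exceed). The paper closes exactly this gap differently: by \LEref{BTF4.2} block $\idq$ is a sub-DAE with valid local offsets, so the basic LC theorem of \cite{tgn2015c} applies to it and gives $\val{\newSig_{\idq\idq}}<\val{\Sig_{\idq\idq}}$; then, assuming $\val\newSig=\val\Sig$, Lemma~\leref{LCsigma} supplies the hypotheses of \LEref{blockstructure}, which makes $\offsetpair$ valid for $\newSig$ and the block form a BTF of the new Jacobian pattern, so by \LEref{unionHVT} every HVT of $\newSig$ decomposes blockwise; summing blockwise values then contradicts the strict drop in block $\idq$. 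Some version of this global argument (or an equivalent one) is indispensable; your proposal never forces the row-$\indxk$ HVT entry into block $\idq$.

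Three smaller points. First, the case $c_\indxk>\LCconst$ you labour over is vacuous: $\indxk\in\nzset$ means $c_\indxk=\LCconst$ by \rf{LCblkdef}, so your modified offset $\newc$ is just $\~c$. Second, in step~(1) for $j\notin B_\idq$ you bound only the contributions of the terms $f_i^{(c_i-\LCconst)}$ and ignore that derivatives of $x_j$ can occur in the coefficients $u_i$ themselves (they do, e.g.\ $x_3'$ in the introductory example); one needs $\hoder{x_j}{\vecu}\le\max_{i\in\eqsetI}\sij{i}{j}$, which is precisely the extra estimate in the paper's Lemma~\leref{LCsigma}. Third, for $j\in B_\idq$ your claim $\newsij{\indxk}{j}=\hoder{x_j}{\vecu}$ is wrong (the $f_i^{(c_i-\LCconst)}$ contribute too); the strict bound $\newsij{\indxk}{j}<d_j-\LCconst$ there rests on the cancellation of the leading derivatives coming from $\~\hu\in\coker{\Jblk{\idq}{\idq}}$, i.e.\ on the basic LC theorem applied to the block, which you declared you would not invoke --- but invoking it blockwise is both legitimate and the intended route.
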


Before proving this theorem, we show how to apply the block LC method and prove a related lemma.
\begin{example}\label{ex:robotarmLC}
We illustrate the block LC method with the Campbell-Griepentrog two-link robot arm DAE \cite{campbell1995solvability}. We slightly simplify the problem formulation to \rf{robotarm}, allowing the first-order derivatives $x'_1$, $x'_2$, and $x'_3$ to occur implicitly in the equations. 
The two state variables $u_1$ and $u_2$ in the original formulation are renamed $x_4$ and $x_5$, respectively (and not to be confused with entries in a vector $\vecu$ in our notation).

The equations of this problem are
\begin{equation}\label{eq:robotarm}
\begin{aligned}
0 = A =\ & x''_1 - \Bigl[2c(x_3)(x'_1+x'_3)^2 + x'^2_1 d(x_3) + (2x_3-x_2)\bigl(a(x_3)+2b(x_3)\bigr) \\
&\phantom{x''_1} + a(x_3)(x_4-x_5)\Bigr] \\
0 = B =\ & x''_2 -\Bigl[ -2c(x_3)(x'_1+x'_3)^2-x'^2_1 d(x_3) +(2x_3-x_2)\bigl(1-3a(x_3)-2b(x_3) \bigr)  \\
 &\phantom{x''_2} - a(x_3)x_4+ \bigl( a(x_3)+1\bigr) x_5 \Bigr]\\
0 = C =\ & x''_3 -\Bigl[- 2c(x_3)(x'_1+x'_3)^2-x'^2_1 d(x_3)
 +(2x_3-x_2)\bigl( a(x_3)-9b(x_3)\bigr)\\ 
&\phantom{x''_3} -2x'^2_1 c(x_3) - d(x_3)\bigl( x'_1+x'_3\bigr)^2 -\bigl( a(x_3)+b(x_3) \bigr) (x_4-x_5)\Bigr] \\
0 = D =\ & \cos x_1 + \cos(x_1+x_3) - p_1(t)\\
0 = E =\ & \sin x_1 + \sin(x_1+x_3) - p_2(t)\;,
\end{aligned}
\end{equation}
where
\begin{align*}
\begin{array}{ll}
\begin{aligned}
a(\theta) &= 2/(2-\cos^2 \theta) \\
c(\theta) &=\sin \theta/(2-\cos^2 \theta) \\
p_1(t)&= \cos(1-e^t)+\cos(1-t) \\
\end{aligned}
&\qquad
\begin{aligned}
b(\theta) &= \cos \theta/(2-\cos^2 \theta) \\
d(\theta) &= \sin \theta\cos \theta/(2-\cos^2 \theta) \\
p_2(t) &= \sin(1-e^t)+\sin(1-t) \;.
\end{aligned}
\end{array}
\end{align*}

\begin{align*}
{\small
\Sig =
\begin{blockarray}{r@{\hspace{1mm}}r*5{@{\hspace{1mm}}c@{\hspace{1mm}}}  @{\hspace{1mm}} c}
 && x_{2} & \phanbull x_{4} & \phanbull x_{5} & \phanbull x_{1} & \phanbull x_{3} & \s{c_i} \\
\begin{block}{r@{\hspace{1mm}}r @{\hspace{2mm}}[*5{@{\hspace{1mm}}c@{\hspace{1mm}}}]@{\hspace{1mm}}c}
f_{1}&B&\phanbull2^\bullet&\lbar{0}&0&\phanbull\;\,\OK{1}&\OK{1}&\s{0}\\\cline{3-5}
f_{2}&C&\OK{0}&\lbar{\phanbull0^\bullet}&0&\lbar{\OK{1}}&2&\s{0}\\
f_{3}&A&\OK{0}&\lbar{0}&\phanbull0^\bullet&\lbar{2}&\OK{1}&\s{0}\\\cline{4-7}
f_{4}&D &&&&\lbar{\phanbull0^\bullet}&0&\s{2}\\
f_{5}&E &&&&\lbar{0}&\phanbull0^\bullet&\s{2}\\
\end{block}
&\s{d_j} &\s{2}&\phanbull\s{0}&\s{0}&\phanbull\s{2}&\s{2} \\
 \end{blockarray}\quad
\Jac = 
\begin{blockarray}{lccccccc}
 &x''_{2} &x_{4} &x_{5} &x''_{1} &x''_{3} \\
\begin{block}{l @{\hspace{10pt}}[ccccc]cc}
B &1&\lbar{a_3}&-a_3-1&&\\\cline{2-4}
C &&\lbar{\ a_3+b_3}&-a_3-b_3&\lbar{}&1\\
A & &\lbar{-a_3}&a_3&\lbar{1}&\\\cline{3-6}
D'' &&&&\lbar{\ \pp{D}{x_1}}&\pp{D}{x_3}\\
E'' &&&&\lbar{\ \pp{E}{x_1}}&\pp{E}{x_3}\\
\end{block}
&
\end{blockarray}
}
\end{align*}
Here in $\Jac$,
\begin{align*}
\begin{alignedat}{5}
a_3&=a(x_3)=2/(2-\cos^2 x_3) && &b_3&=b(x_3) &&=\cos x_3/(2-\cos^2 x_3)\\
\ppin{D}{x_1} &= - \sin x_{1} - \sin(x_{1} + x_{3})     &\qquad\qquad&&&\ppin{D}{x_3} &&= - \sin(x_{1} + x_{3}) 
\\ \ppin{E}{x_1} &= \phantom{-}\cos x_{1} + \cos(x_{1} + x_{3})
&&&& \ppin{E}{x_3} &&= \phantom{-}\cos(x_{1} + x_{3})\;.
\end{alignedat}
\end{align*}

The DAE \rf{robotarm} is of differentiation index 5, while the SA reports structural index $\nu_S=3$. Hence this must be a failure case, because $\nu_S$ is an upper bound for the differentiation index when the SA succeeds \cite{pryce2001a}.
We can see that the sub-Jacobian $\Jblk{2}{2}$ of block 2 is identically singular.

Our method first computes $\~\hu = [2, 2+\cos x_3]^T \in\coker{\Jblk{2}{2}}$. Then $\vecu=[0, 2, 2+\cos x_3, 0, 0]^T$. Using \rf{LCblkdef}, we have
\begin{align*}
\eqsetI &= \setbg{ i \mid  u_i\nequiv 0}= \setbg{2,3}, 
\quad \LCconst = \min_{i \in \eqsetI} c_{i} = 0, 
\quad \nzset =  \set{2,3}, 
\qaq \newL = \set{2}\;.
\end{align*}
The variables $x_4$ and $x_5$ in block 2 do not occur in $\vecu$, so the condition \rf{LCblkcond}  is satisfied. 

Considering equivalence, we pick $\indxk=2\in\newL$ over $\indxk=3\in\nzset\setminus\newL$, and replace $f_\indxk=C$ by $\overline C = u_1C + u_2 A = 2C + (2+\cos x_3) A$.
The SA results of the resulting DAE are as follows.
\begin{equation*}
{\small
\begin{alignedat}{3}
\newSig &=
\begin{blockarray}{r   *5{@{\hspace{1mm}}c@{\hspace{1mm}}}  @{\hspace{1mm}} c}
 & x_{4} & x_{5} & \phanbull x_{2} & x_{1} & x_{3} & \s{c_i}  \\
\begin{block}{r @{\hspace{2mm}}[*5{@{\hspace{1mm}}c@{\hspace{1mm}}}]@{\hspace{1mm}}c}
A&0^\bullet&0\phanbull&\lbar{\OK{0}}&\OK{2}&\OK{1}&\s{0}\\
B&0\phanbull&0^\bullet&\lbar{2}&\OK{1}&\OK{1}&\s{0}\\\cline{2-4}
 \overline C&&&\lbar{\phanbull 0^\bullet}&\lbar{2\phanbull }&2 &\s{2}\\\cline{4-6}
D&&&&\lbar{0^\bullet}&0&\s{4}\\
E&&&&\lbar{0\phanbull}&\phanbull 0^\bullet&\s{4}\\
\end{block}
\s{d_j} &\s{0}&\s{0}\phanbull &\phanbull\s{2}&\s{4}&\s{4}  &  \\
\end{blockarray} \quad
\newJ = 
\begin{blockarray}{l*5{c@{\hspace{2mm}}}}
 &x_{4} &x_{5} &x''_{2} &x^{(4)}_{1} & x^{(4)}_{3} \\
\begin{block}{l @{\hspace{10pt}}[ccccc]}
A&-a_3&a_3&\lbar{}&&\\
B&\phantom{-}a_3 &-a_3-1  &\lbar{1} &&\\\cline{2-4}
&&&\dbar&\dbar\\[-2ex]
\overline C''&&&\lbar{\ \ \pp{\overline C}{x_2}\ \ }&\lbar{\ \ 2+\cos x_3\ \ }& 2\\ \cline{4-6}
D^{(4)}&&&&\lbar{\ \pp{D}{x_1}}&\pp{D}{x_3}\\
E^{(4)}&&&&\lbar{\ \pp{E}{x_1}}&\pp{E}{x_3}\\
\end{block}
\end{blockarray}
\end{alignedat}
}
\end{equation*}
Here $\ppin{\overline{C}}{x_2} = 2(a_3^2-3a_3b_3+b_3^2)(2-\cos^2 x_3)$. The SA reports $\nu_S=5$ and succeeds at any point where \[\det(\newJ) = 4(a_3^2-3a_3b_3+b_3^2)\sin x_3 \neq 0\;.\] Now $\val\newSig = 0<2 = \val\Sig$.

\end{example}

\begin{lemma}\label{le:LCsigma}
Consider a BTF of a Jacobian pattern $\SpJac$ derived from $\Sig$ and $\offsetpair$. 
If we perform the LC conversion as described in \THref{LCblk}, then in the resulting $\newSig$,
\begin{align}\label{eq:LCsigma}
d_j - c_i
\left\{
\begin{alignedat}{7}
&>\newsij{i}{j}     &\quad& \text{if $\blk{j} < \blk{i}$}\\
&\ge \newsij{i}{j}  && \text{if $\blk{j} \ge \blk{i}$}\;.
\end{alignedat}
\right.
\end{align}
\end{lemma}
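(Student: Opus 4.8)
The plan is to verify \eqref{eq:LCsigma} one row at a time. For every row $i\neq\indxk$ the $i$-th row of $\newSig$ coincides with that of $\Sig$, so for such $i$ the assertion \eqref{eq:LCsigma} is exactly \eqref{eq:djciblk} and nothing is needed; all the content is in row $\indxk$. From \eqref{eq:LCblkdef} one has $\indxk\in\nzset\subseteq\eqsetI\subseteq B_\idq$, hence $\blk{\indxk}=\idq$ and $c_\indxk=\LCconst$, so the two cases of \eqref{eq:LCsigma} for $i=\indxk$ read: $\newsij{\indxk}{j}<d_j-\LCconst$ when $\blk j<\idq$, and $\newsij{\indxk}{j}\le d_j-\LCconst$ when $\blk j\ge\idq$. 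First I would bound $\newsij{\indxk}{j}$ by splitting the new equation $\newf=\sum_{i\in\eqsetI}u_i f_i^{(c_i-\LCconst)}$ into its two kinds of ingredients: a derivative of $x_j$ can occur in $\newf$ only through some $f_i^{(c_i-\LCconst)}$ (to order at most $\sij ij+c_i-\LCconst$) or through some coefficient $u_i$ (to order at most $\hoder{x_j}{\vecu}$), with $i$ ranging over $\eqsetI$; hence
\[
\newsij{\indxk}{j}\ \le\ \max\Bigl(\ \max_{i\in\eqsetI}\bigl(\sij ij+c_i-\LCconst\bigr),\ \hoder{x_j}{\vecu}\ \Bigr).
\]

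The first term is immediate from \eqref{eq:djciblk}: every $i\in\eqsetI$ lies in $B_\idq$, so $\blk i=\idq$, and \eqref{eq:djciblk} at $(i,j)$ gives $\sij ij<d_j-c_i$ if $\blk j<\idq$ and $\sij ij\le d_j-c_i$ if $\blk j\ge\idq$; adding $c_i-\LCconst\ge0$ turns these into $\sij ij+c_i-\LCconst<d_j-\LCconst$, respectively $\le d_j-\LCconst$, which is exactly what \eqref{eq:LCsigma} demands. For the second term, if $j\in B_\idq$ then the LC condition \eqref{eq:LCblkcond} yields $\hoder{x_j}{\vecu}<d_j-\LCconst$ outright. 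So the only remaining case is the second term with $j\notin B_\idq$.

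That case is where the block structure has to do the work in place of \eqref{eq:LCblkcond}, and I expect it to be the main obstacle. The plan there is to establish an auxiliary order bound, $\hoder{x_j}{\vecu}\le\max_{a\in\eqsetI}\sij aj$ for every variable $x_j$. Since $u_a\equiv0$ for $a\notin\eqsetI$, the cokernel relation $\sum_{a\in B_\idq}u_a r_a=0$ (with $r_a=(\Jij ab)_{b\in B_\idq}$ the $a$-th row of $\Jblk{\idq}{\idq}$) collapses to $\sum_{a\in\eqsetI}u_a r_a=0$; so $(u_a)_{a\in\eqsetI}$ is a left null vector of the submatrix of $\Jblk{\idq}{\idq}$ carried by the rows in $\eqsetI$. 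Choosing $\vecu$, as in \cite{tgn2015c}, without a spurious common factor, its nonzero entries are rational in the entries $\Jij ab$ ($a\in\eqsetI$) of that submatrix, and each $\Jij ab$, being a partial derivative of $f_a$, involves $x_j$ to order at most $\sij aj$; this gives the auxiliary bound. Feeding it into \eqref{eq:djciblk}: for $a\in\eqsetI\subseteq B_\idq$ one has $c_a\ge\LCconst$, with $\sij aj<d_j-c_a\le d_j-\LCconst$ when $\blk j<\idq$ and $\sij aj\le d_j-c_a\le d_j-\LCconst$ when $\blk j>\idq$; taking the maximum over $a\in\eqsetI$ bounds $\hoder{x_j}{\vecu}$ strictly below $d_j-\LCconst$ in the first case and by $d_j-\LCconst$ in the second.

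Combining the two terms, the displayed bound on $\newsij{\indxk}{j}$ is strictly below $d_j-\LCconst=d_j-c_\indxk$ whenever $\blk j<\idq=\blk{\indxk}$ and at most $d_j-c_\indxk$ whenever $\blk j\ge\idq$, which is \eqref{eq:LCsigma} for row $\indxk$. The genuinely delicate point is the auxiliary bound: one must justify both the reduction of $\vecu$ to the rows indexed by $\eqsetI$ (which rests on $u_a\equiv0$ off $\eqsetI$) and the absence of any factor in $\vecu$ that could raise an external variable's derivative order past $\sij aj$; the rest is bookkeeping with \eqref{eq:djciblk} and with the validity of $\offsetpair$.
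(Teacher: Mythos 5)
Your proposal is correct and follows essentially the same route as the paper: rows $i\neq\indxk$ are dispatched by \rf{djciblk}, and for row $\indxk$ you bound $\hoder{x_j}{\newf}$ by the two contributions $\max_{i\in\eqsetI}\bigl(\sij{i}{j}+c_i-\LCconst\bigr)$ and $\hoder{x_j}{\vecu}$, controlling the latter through the entries of $\Jblk{\idq}{\idq}$ exactly as in the paper's inequalities \rf{LCblkbelow1} and \rf{LCblkbelow2}. Your explicit justification of $\hoder{x_j}{\vecu}\le\max_{a\in\eqsetI}\sij{a}{j}$ (restriction to the rows indexed by $\eqsetI$, no spurious factors in $\vecu$) merely spells out what the paper calls ``simple derivations'', and invoking \rf{LCblkcond} for $j\in B_\idq$ in place of the paper's uniform non-strict Jacobian-based bound for $\blk{j}\ge\idq$ is an inessential variation.
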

\begin{proof}
We only replace $f_\indxk$ by $\newf_\indxk=\newf$ in a conversion, so $\newsig_{ij} = \sij{i}{j}$ for all $i\neq\indxk$ and all $j$. By \rf{djciblk}, \rf{LCsigma} holds for all $i\neq\indxk$.

When $i=\indxk$, we consider two cases:  (a) $\blk{j}<\idq$ and (b) $\blk{j}\ge\idq$.

\medskip

\noindent (a)\ \ $\blk{j}<\idq=\blk{\indxk}$. By \rf{djciblk}, $\sij{\indxk}{j}< d_j-c_\indxk$. Then $\newsij{\indxk}{j}$ is 
\begin{align}\label{eq:LCblkbelow}
\hoder{x_j}{\newf_\indxk}  
= \sigma\Bigl({x_j},{\sum_{i\in \eqsetI} u_i f_i^{\left(c_i-\LCconst\right)}}\Bigr) 
\le \max\Bigl\{ \hoder{x_j}{\vecu},\, \max_{i\in\eqsetI} \sigma\bigl({x_j},{f_i^{(c_i-\LCconst)}}\bigr) \Bigr\} \;. 
\end{align}
We use some simple derivations to obtain
\begin{subequations}\label{eq:LCblkbelow12}
\begin{align}
\hoder{x_j}{\vecu} 
&\le \hoder{x_j}{\Jblk{\idq}{\idq}}  
\le \max_{i\in\eqsetI} \hoder{x_j}{f_i}= \max_{i\in\eqsetI} \sij{i}{j}  \nonumber \\
&< \max_{i\in\eqsetI} (d_j - c_i)
= d_j - \min_{i\in\eqsetI} c_i 
= d_j - c_l \label{eq:LCblkbelow1} \quad\text{and} \\
\max_{i\in\eqsetI}\hoder{x_j}{f_i^{(c_i-\LCconst)}}
&= \max_{i\in\eqsetI} \left(   \sij{i}{j}+c_i-\LCconst \right)  
<  d_j - \LCconst
=  d_j-c_\indxk\;. \label{eq:LCblkbelow2}
\end{align}
\end{subequations}
Substituting \rf{LCblkbelow1} and \rf{LCblkbelow2} in  \rf{LCblkbelow}, we obtain $\newsij{\indxk}{j} =\hoder{x_j}{\newf_\indxk}< d_j - c_\indxk$.

\medskip

\noindent (b)\ \  $\blk{j}\ge\idq=\blk{\indxk}$. By \rf{djciblk}, $\sij{\indxk}{j}\le d_j-c_\indxk$. We can replace the two ``$<$'' in \rf{LCblkbelow12}  by ``$\le$'', and using these inequalities in \rf{LCblkbelow}, we have $\newsij{\indxk}{j} \le d_j - c_\indxk$. \qed

\end{proof}

Using \LEref{LCsigma}, we can now prove \THref{LCblk}.

\begin{proof}
By  \LEref{BTF4.2}, we  can regard block $\idq$ as a sub-DAE with $\Sig_{\idq\idq}$ and $(\~c_\idq; \~d_\idq)$. The conversion described in \THref{LCblk} can be considered as an application of the basic LC method to this sub-DAE. Since the block LC condition \rf{LCblkcond} holds, that is, $\hoder{x_j}{\vecu}< d_j-\LCconst$ for all $j\in B_\idq$ that belong to this sub-DAE, the basic LC condition \defterm{\LCcondEqnum} also holds for the sub-DAE. Hence $\val{\newSig_{\idq\idq}}<\val{\Sig_{\idq\idq}}$.

Let $\newT$ be a HVT of $\newSig$. Using \rf{LCsigma} in \LEref{LCsigma}, we have 
$
\val{\newSig}=\sum_{(i,j)\in\newT} \newsij{i}{j} \le d_j-c_i = \val\Sig
$.
Now we prove $\val{\newSig}<\val{\Sig}$ by contradiction. First assume that $\val{\newSig}=\sum_j d_j-\sum_i c_i=\val{\Sig}\ge0$. With \rf{LCsigma} , the three conditions in \LEref{blockstructure} are satisfied.  From this lemma, it follows that the Jacobian patterns $\overline{\~S}_0$, derived from $\newSig$ and $\offsetpair$, and $\SpJac$, derived from $\Sig$ and $\offsetpair$, are in the same  $p\times p$ BTF.

By \LEref{unionHVT},  $\newT$ is the union of HVTs $\newT_\idv$ of all diagonal blocks $\newSig_{\idv\idv}$, $\idv=\rnge{1}{p}$. 
By the construction of $\newSig$, $\val{\newSig_{\idv\idv}}=\val{\Sig_{\idv\idv}}$ for all $\idv\neq\idq$. Then a contradiction follows from
\begin{align}\label{eq:sumdiag}
\nonumber\val{\newSig}  
&= \sum_{(i,j)\in\newT} \newsij{i}{j}
= \sum_{\idv=1}^p  \sum_{(i,j)\in\newT_\idv} \newsij{i}{j}
= \sum_{\idv=1}^p \val{\newSig_{\idv\idv}}  
\\\nonumber&= \sum_{\idv\neq\idq}\val{\newSig_{\idv\idv}} + \val{\newSig_{\idq\idq}}
< \sum_{\idv\neq\idq}\val{\Sig_{\idv\idv}} + \val{\Sig_{\idq\idq}}
\\&= \sum_{\idv=1}^p \val{\Sig_{\idv\idv}} 
= \sum_{\idv=1}^{p}  \sum_{(i,j)\in T_\idv} \sij{i}{j}
= \sum_{(i,j)\in T} \sij{i}{j}
= \val{\Sig}\;,
\end{align}
where $T$ is a HVT of $\Sig$ and $T_\idv$ are HVTs of its diagonal blocks $\Sig_{\idv\idv}$. 
The assumption $\val\newSig=\val\Sig$ is hence false, so $\val\newSig<\val\Sig$ holds.
\end{proof}

\subsection{Block expression substitution method.}\label{ssc:ESBTF}

Assume again that a $\Jblk{\idq}{\idq}$ is identically singular. Let  $\~\hv\in\ker{\Jblk{\idq}{\idq}}$, where $\~\hv\nequiv \~0_{N_\idq}$. Similarly, we construct the column $n$-vector 
\[
\vecv = \begin{bmatrix}
\~0_{N_1+\cdots+N_{\idq-1}} \\ 
\~\hv \\ 
\~0_{N_{\idq+1}+\cdots+N_p}
\end{bmatrix}\;.
\]

We use notation similar to that used in the basic ES method (see \defterm{\S 4.2})\,:
\begin{equation}\label{eq:ESblkdef}
\begin{aligned}
\nzsetES &=\setbg{ j\mid v_j\nequiv 0}\subseteq B_\idq, 
\quad  \eqsetIES = \setbg{ i\in B_\idq \mid \text{$d_j-c_i=\sij{i}{j}$ for some $j\in \nzsetES$} }\;,  \\
s &= |\nzsetES|, 
\quad \ESconst = \max_{i\in \eqsetIES} c_i
\qaq \newJes = \setbg{ \indxk \mid \text{$v_\indxk$ is (nonzero) constant}} \;.
\end{aligned}
\end{equation}
The set $\newJes$ is used to seek a conversion that guarantees equivalence between the original DAE and the converted one.
The conditions for applying the block ES method~are
\begin{equation}\label{eq:ESblkcond}
\begin{aligned}
\hoder{x_j}{\vecv}  &\casemod{ll}{
< d_j-\ESconst     & \quad\text{if $j\in \nzsetES\ $ or $\ \blk{j}<\idq$ 
} \\[1ex]
\le d_j-\ESconst    & \quad\text{if $j\in B_\idq \setminus \nzsetES\ $ or $\ \blk{j}>\idq$,
}
} \quad\text{and}\\
d_j-\ESconst &\ge 0 \quad \text{for all $j\in\nzsetES$\;.} 
\end{aligned}
\end{equation}

We choose an $l\in\nzsetES$, and introduce $s-1$ new variables
\begin{align}\label{eq:ESblkgjgk}
y_j &= 
x_j^{(d_j-\ESconst)} - \frac{v_j}{v_\indxk} \cdot x_\indxk^{(d_\indxk-\ESconst)} \quad\text{for all $j\in\Jnol$}\;.
\end{align}
In each $f_i$ with $i\in B_\idq$, we
\begin{equation}\label{eq:ESblksubs}
\begin{aligned}
\text{replace each} &\quad \text{$x_j^{(\sij{i}{j})}\quad$ with $d_j-c_i=\sij{i}{j}$ and $j\in\Jnol$} \\
\text{by} &\quad \Big( y_j + \frac{v_j}{v_\indxk} \cdot x_\indxk^{(d_\indxk-\ESconst)} \Big)^{(\ESconst-c_i)}
\;.
\end{aligned}
\end{equation}
Note that because of $\eqsetIES$ in \rf{ESblkdef}, we actually perform replacements (equivalently referred to as ``expression substitutions'') in only $f_i$'s with $i\in \eqsetIES\subseteq B_\idq$. Denote each new $f_i$ by $\newf_i$, and let also $\newf_i=f_i$ for the unchanged equations with $i\notin \eqsetIES$. 

By \rf{ESblkgjgk}, we append $s-1$ equations  that prescribe the substitutions in \rf{ESblksubs}:
\begin{align}\label{eq:ESblkgj}
0 = g_j &= -y_j + x_j^{(d_j-\ESconst)} - \frac{v_j}{v_\indxk} \cdot x_\indxk^{(d_\indxk-\ESconst)} \quad \text{for all $j\in\Jnol$}\;.
\end{align}
The block ES method is based on the following theorem.

\begin{theorem}\label{th:ESblk}
Let $\nzsetES$, $s$, $\eqsetIES$, and $\ESconst$ be as defined in \rf{ESblkdef}. Assume that the conditions \rf{ESblkcond} hold. For an $\indxk\in\nzsetES$, if we 
\begin{enumerate}[1)]
\item introduce $s-1$ new variables $x_j$, $j\in\Jnol$, as defined in \rf{ESblkgjgk},
\item perform replacements in $f_i$, for all $i\in B_\idq$, as described in \rf{ESblksubs}, and 
\item append $s-1$ equations $g_j$, $j\in\Jnol$, as defined in \rf{ESblkgj},  
\end{enumerate}
then $\val{\newSig} < \val{\Sig}$, where $\newSig$ is the signature matrix of the resulting DAE.
\end{theorem}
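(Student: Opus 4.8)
The plan is to follow the same overall strategy that worked for Theorem~\ref{th:LCblk}, namely: (1) reduce the block ES conversion to an application of the basic ES method on the sub-DAE given by block $\idq$, so that the basic ES theorem gives $\val{\newSig_{\idq\idq}} < \val{\Sig_{\idq\idq}}$; (2) prove an analogue of Lemma~\ref{le:LCsigma} showing that after the block ES conversion the modified signature matrix $\newSig$ still satisfies the ``BTF-compatible'' inequalities $d_j - c_i > \newsij{i}{j}$ below the block diagonal and $d_j - c_i \ge \newsij{i}{j}$ elsewhere, with respect to a suitably extended offset pair; and (3) combine these two facts with Lemmas~\ref{le:unionHVT} and \ref{le:blockstructure} in a contradiction argument identical in form to \rf{sumdiag}.

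First I would set up the bookkeeping for the enlarged system. The ES method adds $s-1$ new variables $y_j$ (for $j \in \Jnol$) and $s-1$ new equations $g_j$. I would place these new rows and columns \emph{inside} block $\idq$, so that the new block $\newSig_{\idq\idq}$ has size $N_\idq + (s-1)$ while all other blocks are literally unchanged. The offsets for the old rows/columns of block $\idq$ stay as $(\~c_\idq;\~d_\idq)$; for the new variables $y_j$ I would assign $d_{y_j} = d_j - \ESconst$ and for the new equations $g_j$ I would assign $c_{g_j} = 0$ (this is exactly what the basic ES method does — compare \defterm{\S4.2}). The condition $d_j - \ESconst \ge 0$ from \rf{ESblkcond} guarantees these new offsets are nonnegative. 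With this placement, the new global offset pair $\offsetpair$ is just the old one with these extra entries inserted in block $\idq$, and the block decomposition \rf{pBypBlockForm} is preserved structurally.

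Next I would verify the extended inequalities. For rows $i$ with $i \notin \eqsetIES$ (including all rows outside block $\idq$) nothing changes, so \rf{djciblk} still applies verbatim. For a modified row $\newf_i$ with $i \in \eqsetIES \subseteq B_\idq$, I need to bound $\newsij{i}{j} = \hoder{x_j}{\newf_i}$ and $\hoder{y_{j'}}{\newf_i}$. The substitution \rf{ESblksubs} replaces $x_j^{(\sij{i}{j})}$ by an expression in $y_j$ and $x_\indxk^{(d_\indxk - \ESconst)}$, differentiated $\ESconst - c_i$ times; the highest $y_j$-derivative that appears is order $d_j - \ESconst + (\ESconst - c_i) = d_j - c_i$, so $\hoder{y_j}{\newf_i} \le d_{y_j} - c_i$ with equality possible — exactly the ``$\ge$'' case of an off-diagonal/on-diagonal entry within the block, which is fine. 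The highest $x_\indxk$-derivative introduced is $d_\indxk - \ESconst + (\ESconst - c_i) = d_\indxk - c_i$, again consistent. For other columns $x_j$ of the block, I would use the ES conditions \rf{ESblkcond} on $\hoder{x_j}{\vecv}$ together with the fact that $\vecv \in \ker{\Jblk{\idq}{\idq}}$ and the bound $\hoder{x_j}{\Jblk{\idq}{\idq}} \le \max_{i} \sij{i}{j}$, in the same style as \rf{LCblkbelow12}, to show the substitution cannot raise any entry above its allowed bound; for columns $x_j$ with $\blk{j} < \idq$ the strict inequality in \rf{djciblk} combined with the strict clause of \rf{ESblkcond} gives the required strict inequality for $\newsij{i}{j}$. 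Finally, the new equations $g_j$ of \rf{ESblkgj} have $c_{g_j}=0$ and involve $x_j^{(d_j-\ESconst)}$, $x_\indxk^{(d_\indxk-\ESconst)}$, and $y_j$, each to order exactly matching its $d$-offset minus $0$, so the $g_j$-rows satisfy all inequalities with equality on the relevant entries and strict inequality below the block diagonal (there is nothing of lower block index occurring). This establishes the analogue of \rf{LCsigma} for $\newSig$.

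With that lemma in hand the endgame is routine and mirrors the proof of \THref{LCblk}: a HVT $\newT$ of $\newSig$ satisfies $\val{\newSig} = \sum_{(i,j)\in\newT}\newsij{i}{j} \le \sum_j d_j - \sum_i c_i$, so $\val{\newSig} \le \val\Sig$ (the right-hand side is unchanged because the added $y_j$ contribute $d_j - \ESconst$ to $\sum d$ and the added $g_j$ contribute $0$ to $\sum c$, and one checks these cancel against... — actually here I must be careful: $\sum_j d_j$ and $\sum_i c_i$ both grow, and I would verify that $\sum_{\text{new }j} d_{y_j} - \sum_{\text{new }i} c_{g_j} = \sum_{j\in\Jnol}(d_j-\ESconst)$, which is precisely the amount by which $\val{\Sig_{\idq\idq}}$ can increase under the basic ES method while still yielding a net decrease — this is exactly the content of the basic ES theorem, so I would just cite it rather than re-derive it). Then assume for contradiction $\val{\newSig} = \val\Sig$; by the extended inequalities, Lemma~\ref{le:blockstructure} applies to $\newSig$ with the extended offset pair, so $\newSig$ is in the same block form, and Lemma~\ref{le:unionHVT} splits $\newT = \newT_1 \cup \cdots \cup \newT_p$ with $\val{\newSig} = \sum_\idv \val{\newSig_{\idv\idv}}$. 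Since $\val{\newSig_{\idv\idv}} = \val{\Sig_{\idv\idv}}$ for $\idv \ne \idq$ and $\val{\newSig_{\idq\idq}} < \val{\Sig_{\idq\idq}}$ from step (1), we get $\val{\newSig} < \sum_\idv \val{\Sig_{\idv\idv}} = \val\Sig$, contradicting the assumption. Hence $\val{\newSig} < \val\Sig$.

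I expect the main obstacle to be step (2), specifically keeping the accounting honest when the block is enlarged: one must be sure that assigning $c_{g_j} = 0$ and $d_{y_j} = d_j - \ESconst$ really does produce a valid offset pair for the enlarged block (this is where $d_j - \ESconst \ge 0$ and the two-sided conditions \rf{ESblkcond} on $\hoder{x_j}{\vecv}$ are used) and that no entry in a row $\newf_i$ that reaches \emph{outside} the block — into columns $x_j$ with $\blk{j} \ne \idq$ that occurred in the original $f_i$ but were not themselves substituted — is pushed above its allowed bound. The substitution only touches derivatives $x_j^{(\sij{i}{j})}$ with $j \in \Jnol \subseteq \nzsetES \subseteq B_\idq$, so columns outside the block are untouched and their entries are inherited from $\Sig$, which already satisfied \rf{djciblk}; making this ``only touches in-block columns'' observation explicit is the crux that makes the off-block inequalities automatic. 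The in-block analysis then reduces, via \LEref{BTF4.2}, to exactly the verification done for the basic ES method in \cite{tgn2015c}, so I would lean on that reference for the fine-grained derivative bookkeeping rather than repeating it.
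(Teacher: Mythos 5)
Your overall architecture is the same as the paper's: regard block $\idq$ as a sub-DAE (Lemma~\leref{BTF4.2}), get $\val{\newSig_{\idq\idq}}<\val{\Sig_{\idq\idq}}$ from the basic ES theorem, prove an analogue of Lemma~\leref{LCsigma} for an extended offset pair, bound $\val\newSig\le\val\Sig$, and close with the contradiction argument of \rf{sumdiag} via Lemmas~\leref{blockstructure} and~\leref{unionHVT}. The gap is in the one step you yourself flag as the crux: your choice of offsets for the new rows and columns, $c_{g_j}=0$ and $d_{y_j}=d_j-\ESconst$, is wrong, and it is founded on a mis-count of the order to which $y_j$ occurs in the modified equations. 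In \rf{ESblksubs} the variable $y_j$ sits undifferentiated inside an expression that is then differentiated $\ESconst-c_i$ times, so $\hoder{y_j}{\newf_i}=\ESconst-c_i$, not $d_j-\ESconst+(\ESconst-c_i)=d_j-c_i$ as you claim (you are conflating the order of the $x_j$-content hidden in $y_j$ with the order of the new variable itself). With your offsets the required inequality $\hoder{y_j}{\newf_i}\le d_{y_j}-c_i$ becomes $\ESconst-c_i\le d_j-\ESconst-c_i$, i.e.\ $d_j\ge 2\ESconst$, which is false in general; indeed it already fails in the paper's introductory example \rf{blockLC1}, where $d_1=\ESconst=1$ gives $d_{y_1}=0$ while $y'_1$ occurs in $\newf_2$ with $c_{\newf_2}=0$. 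So the lemma you need in step (2) is false as you set it up. Moreover, even granting those inequalities, your value accounting breaks: with your offsets the new columns add $\sum_{j\in\Jnol}(d_j-\ESconst)$ to $\sum_j d_j$ while the new rows add nothing to $\sum_i c_i$, so $\sum_j\newd_j-\sum_i\newc_i>\val\Sig$ in general and the crucial bound $\val\newSig\le\val\Sig$ does not follow; your parenthetical ``I must be careful \dots\ I would just cite the basic ES theorem'' papers over exactly this point, and the basic ES theorem (a statement about the block value) does not supply the missing global inequality.

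The fix is the paper's construction \rf{ESblknewcd}: assign \emph{both} the new equations $g_j$ and the new variables $y_j$ the offset $\ESconst$. Then $y_j$ occurring to order $\ESconst-c_i$ in $\newf_i$ and to order $0$ in $g_j$, and $x_j^{(d_j-\ESconst)}$, $x_\indxk^{(d_\indxk-\ESconst)}$ occurring in $g_j$, all respect $\newd_j-\newc_i\ge\newsij{i}{j}$ (Lemma~\leref{ESblklemma}, proved in Appendix~\ref{sc:proofESblknewSig}); and the extra contributions $s\ESconst$ to $\sum\newd_j$ and to $\sum\newc_i$ cancel, giving $\val\newSig\le\sum_j\newd_j-\sum_i\newc_i=\val\Sig$, after which your contradiction endgame goes through verbatim. (Your remaining observations---that the substitutions touch only in-block columns so the off-block inequalities are inherited from \rf{djciblk}, and that strictness below the diagonal comes from the strict clause of \rf{ESblkcond}---are correct and match the paper's case analysis; the minor difference that the paper adds $s$ rather than $s-1$ auxiliary variables, via $y_\indxk=x_\indxk^{(d_\indxk-\ESconst)}$, is harmless bookkeeping.)
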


Before proving this theorem, we illustrate the block ES method with the robot arm DAE \rf{robotarm} and prove two related lemmas.

\begin{example}\label{ex:robotarmES}
The method finds first $\~\hv=[1,1]^T\in\ker{\Jblk{2}{2}}$. Then $\vecv=[0,0,1,1,0]^T$. Using \rf{ESblkdef}, we have
\begin{align*}
\nzsetES = \newJes = \setbg{ j\mid v_j\nequiv 0} = \set{2,3},
\quad s = |\nzsetES|=2,
\quad \eqsetIES = \set{2,3}, \quad \ESconst = \max_{i\in \eqsetIES} c_i = 0\;.
\end{align*}
Since $\vecv$ is constant, $\nzsetES = \newJes$ and the first condition in \rf{ESblkcond} holds. The second condition in \rf{ESblkcond} holds also, as $d_4-\ESconst=d_5-\ESconst=0$.  We choose $x_4$, whose column index in the permuted $\Sig$ is $\indxk = 2\in\newJes$. Then we introduce for $x_5$, the other variable in block 2 with column index $j=3$, a new variable
\begin{align*}
y_5 = x_5^{(d_5-\ESconst)} - \frac{v_3}{v_2} \cdot x_4^{(d_4-\ESconst)} = x_5-x_4\;.
\end{align*}
Correspondingly, we append $0=g_5=-y_5+ x_5-x_4$ and  replace $x_5$ by $y_5+x_4$ in $C$ and $A$, the equations in block 2. 

The resulting DAE has the following new equations
\begin{equation*}\label{eq:robotarmES}
\begin{alignedat}{5}
0 &= \overline{A} &&=  x''_1 - \Bigl[2c(x_3)(x'_1+x'_3)^2 + x'^2_1 d(x_3) + (2x_3-x_2)\bigl(a(x_3)+2b(x_3)\bigr) + a(x_3)y_5 \Bigr] \\
%
%
0 &= \overline{C}  &&= x''_3 -\Bigl[- 2c(x_3)(x'_1+x'_3)^2-x'^2_1 d(x_3) + (2x_3-x_2)\bigl( a(x_3)-9b(x_3)\bigr)\\ 
&&&\phantom{=x''_3-\Big[\,} -2x'^2_1 c(x_3) - d(x_3)\bigl( x'_1+x'_3\bigr)^2 -\bigl( a(x_3)+b(x_3) \bigr) y_5 \Bigr]\\
%
0 &= g_5 &&= -y_5 + x_4-x_5\;.
\end{alignedat}
\end{equation*}

{\footnotesize
\begin{align*}
\newSig =
\begin{blockarray}{r*6{@{\hspace{0.5mm}}c@{\hspace{0.5mm}}}  @{\hspace{0.5mm}} c}
 & x_{4} & x_{5} & x_{2} & y_5 & x_{1} & x_{3} & \s{c_i}  \\
\begin{block}{r @{\hspace{2mm}}[*6{@{\hspace{0.5mm}}c@{\hspace{0.5mm}}}]@{\hspace{0.5mm}}c}
g_5 &\phanbull0^\bullet&0&\dbar&\OK{0}&&&\s{0}\\
B&0&\phanbull0^\bullet&\lbar{2}&&\phanbull\OK{1}&\OK{1}&\s{0}\\\cline{2-5}
\overline C&&&\lbar{\phanbull0^\bullet}&0&\lbar{\OK{1}}&2&\s{2}\\
\overline A&&&\lbar{0}&\phanbull0^\bullet&\lbar{2}&\OK{1}&\s{2}\\\cline{4-7}
D&&&&&\lbar{\phanbull0^\bullet}&0&\s{4}\\
E&&&&&\lbar{0}&\phanbull0^\bullet&\s{4}\\
\end{block}
\s{d_j} &\s{0}&\s{0}&\s{2}&\s{2}&\s{4}&\s{4}&\\
\end{blockarray} 
\quad
\newJ = 
\begin{blockarray}{l *6{@{\hspace{0.5mm}}c@{\hspace{0.5mm}}}  }
 &x_{4} &x_{5} & x''_{2} & y''_5 & x^{(4)}_{1} & x^{(4)}_{3} \\
\begin{block}{l @{\hspace{2mm}}[*6{@{\hspace{0.5mm}}c@{\hspace{0.5mm}}}]}
g_5 &-1&1&\lbar{}&&&\\
B &a_3 &-a_3-1 &\lbar{1}&&&\\\cline{2-5}
\overline C''&&&\lbar{\ a_3-9b_3\ }&\ \ -a_3-b_3\ \ &\lbar{}&1\\
\overline A''&&&\lbar{\ a_3+2b_3\ }& a_3 &\lbar{1}&\\\cline{4-7}
D^{(4)}&&&&&\lbar{\ \ \pp{D}{x_1}\ \ } & \pp{D}{x_3}\\
E^{(4)}&&&&&\lbar{\ \ \pp{E}{x_1}\ \ } & \pp{E}{x_3}\\
\end{block}
\end{blockarray}
\end{align*}
}Now the System Jacobian $\newJ$ is generically nonsingular. The SA reports the correct index 5, and succeeds at any point where $\det(\newJ) = 2(a_3^2-3a_3b_3+b_3^2)\sin x_3 \neq 0$. Again $\val\newSig = 0<2 = \val\Sig$.

\end{example}

In \cite{Pryce98}, Pryce fixed the SA's failure on \rf{robotarm}, and pointed out that {\em only} the introduction of $x_4-x_5$ as a separate variable is essential to his fix. \EXref{robotarmES} verifies Pryce's argument and shows that the block ES method finds his reformulation in a systematic way.

\medskip

To prove \THref{ESblk}, we shall use the following two assumptions.
\begin{enumerate}[(a)]
\item Without loss of generality, we assume the entries $\hv_j\nequiv 0$ are in the first $s$ positions of $\~\hv$, that is, $\~\hv = [\hv_1,\ldots, \hv_s,0,\ldots ,0]^T$. By \rf{ESblkdef}, 
$
\nzsetES = \rnge{\sum_{\idv=1}^{\idq-1} N_\idv+1}{\sum_{\idv=1}^{\idq-1} N_\idv+s}
$.
\item We introduce one more variable $y_\indxk=x_\indxk^{(d_\indxk-\ESconst)}$ for the chosen $\indxk\in\nzsetES$, and append correspondingly one more equation $0=g_\indxk = -y_\indxk + x_\indxk^{(d_\indxk-\ESconst)}$.
\end{enumerate}

We show first that the signature matrix $\newSig$ of the resulting DAE can be put in the block structure as shown in \FGref{ESblkSig}. Then we construct two $(n+s)$-vectors $\~{\newc}$ and $\~{\newd}$ in \rf{ESblknewcd}, and prove in \LEref{ESblklemma} that $\newd_j-\newc_i>\newsij{i}{j}$ holds below the block diagonal, while $\newd_j-\newc_i\ge \newsij{i}{j}$ holds elsewhere. 
Lastly, we prove \THref{ESblk}.

\begin{figure}[!th]
\centering
\includegraphics[trim = 20mm 205mm 72mm 38mm, clip, scale=1]{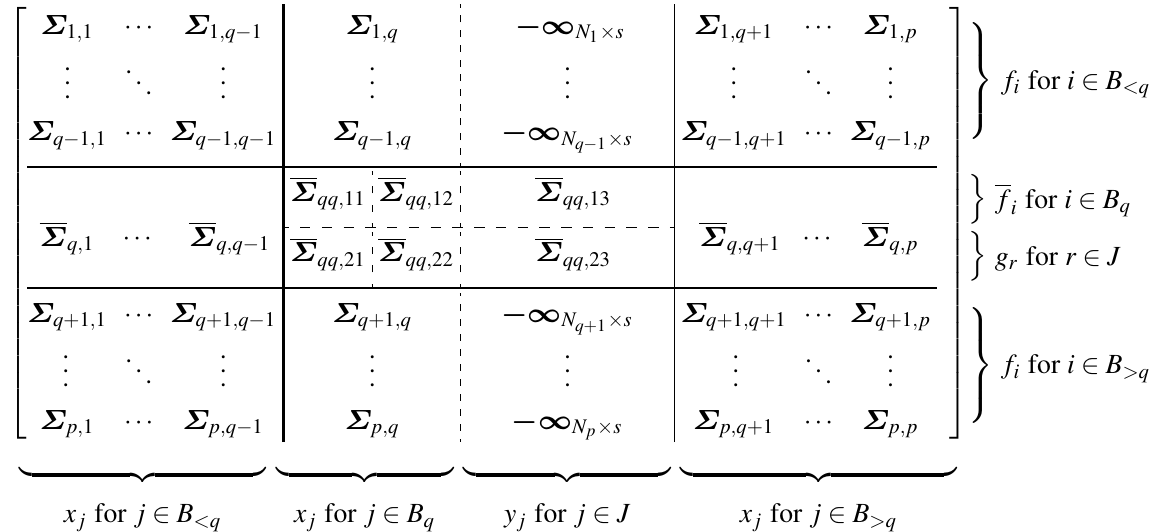}
\caption{\label{fg:ESblkSig} 
Block structure of $\newSig$ of the resulting DAE by the block ES method. 
The notation $B_{<\idq}$  is short for $\cup_{\idv=1}^{\idq-1} B_w$, and $B_{>\idq}$  is short for $\cup_{\idv=\idq+1}^{p} B_\idv$. 
}
\end{figure}

From the description of the conversion in \THref{ESblk}, the substitutions \rf{ESblksubs} only occur in equations $f_i$ with $i\in B_\idq$. Hence, in the resulting DAE, variables $y_j$ for $j\in\nzsetES$ only appear in $\newf_i$ for $i\in B_\idq$ and $g_r$ for $r\in\nzsetES$.

Considering the block structure of $\newSig$ in \FGref{ESblkSig}, we distinguish between the four cases for a block submatrix $\newSig_{\idv_1 \idv_2 }$: 
(a) $\idv_1\neq\idq$ and $\idv_2\neq \idq$, 
(b) $\idv_1\neq\idq$ and $\idv_2=\idq$, 
(c) $\idv_1=\idq$ and $\idv_2\neq\idq$, and 
(d) $\idv_1=\idv_2=\idq$.

\begin{enumerate}[(a)]
\item $\idv_1\neq\idq$ and $\idv_2\neq \idq$. In $\newSig_{\idv_1 \idv_2}$, equations $f_i$ are of indices $i\in B_{<\idq}\cup B_{>\idq}$. As noted in \rf{ESblkdef}, the expression substitutions described in \rf{ESblksubs} only take place in $f_{i'}$ with $i'\in \eqsetIES \subseteq B_\idq$, so do not happen in such blocks $\newSig_{\idv_1 \idv_2}$. Hence, each $\Sig_{\idv_1\idv_2}$ remains unchanged in $\newSig$: $\newSig_{\idv_1 \idv_2 } = \Sig_{\idv_1 \idv_2 }$ for $\idv_1\neq\idq$ and $\idv_2\neq \idq$.

\item $\idv_1\neq \idq$ and $\idv_2=\idq$. 
In $\newSig_{\idv_1 \idq}$, we include variables $y_j$ for $j\in\nzsetES$ as defined in \rf{ESblkgjgk}. 
By the same arguments as in (a), the expression substitutions do not happen in these blocks. That is, $y_j$ for $j\in\nzsetES$ do not appear in equations $f_i$ for $i\in B_{<\idq}\cup B_{>\idq}$. 
Hence, we can obtain $\newSig_{\idv_1\idq}$ by concatenating horizontally  $\Sig_{\idv_1\idq}$ with an $N_{\idv_1}\times s$ matrix of $-\infty$'s:
$\newSig_{\idv_1 \idq}=
\bigl[ \Sig_{\idv_1\idq},\ \bninfty_{N_{\idv_1}\times s}\bigr]$ for $m_1=1,\ldots,\idq-1,\idq+1,\ldots,p$.

\item $\idv_1=\idq$ and $\idv_2\neq\idq$. 
In $\newSig_{\idq \idv_2}$, we include equations $g_r$ for $r\in\nzsetES$ as defined in \rf{ESblkgj}. Also, due to the expression substitutions \rf{ESblksubs} occurring in $f_i$ with $i\in\eqsetIES\subseteq B_\idq$, $\hoder{x_j}{f_i}$ and $\hoder{x_j}{\newf_i}$ may not be the same for all $i\in B_\idq$ and all $j=\rnge{1}{n}$. Hence,~in contrast to cases (a) and (b),  there are no obvious connections between $\Sig_{\idv_1 \idv_2}$ and $\newSig_{\idv_1 \idv_2}$ for $\idv_1=\idq$ and $\idv_2\neq\idq$.

\item $\idv_1=\idv_2=\idq$. $\newSig_{\idq \idq}$ contains signature entries for equations $\newf_i$ and $g_r$, where $i\in B_\idq$ and $r\in\nzsetES$, in variables $x_j$ and $y_r$, where $j\in B_\idq$ and $r\in\nzsetES$.
Similar to the resulting signature matrix $\newSig$ by the basic ES method \cite[\S4.2]{tgn2015c}, $\newSig_{\idq \idq}$ by the block ES method also has a (sub)block structure; c.f \rf{ESblkqq}. We shall use it in the proof of \LEref{ESblklemma} in Appendix~\scref{proofESblknewSig}.
\end{enumerate}

Let $\idQ=\sum_{\idv=1}^\idq N_\idv$ denote the total number of equations (or variables) in the first $\idq$ blocks. Using a valid  $\offsetpair$ of $\Sig$, we construct an offset pair $(\~{\newc};\~{\newd})$ defined as
\begin{equation}\label{eq:ESblknewcd}
\newc_i =\casemod{ll}
{
c_i         &\text{if $i=\rnge{1}{\idQ}$}\\
\ESconst   &\text{if $i=\rnge{\idQ+1}{\idQ+s}$}\\
c_{i-s}   &\text{if $i=\rnge{\idQ+s+1}{n+s}$},
} 
\quad
\newd_j = \casemod{ll}
{
d_j         &\text{if $j=\rnge{1}{\idQ}$}\\
\ESconst   &\text{if $j=\rnge{\idQ+1}{\idQ+s}$}\\
d_{j-s}   &\text{if $j=\rnge{\idQ+s+1}{n+s}$}\;.
}
\end{equation}
Then we have the following lemma. Since its proof is rather technical, we present it in Appendix~\scref{proofESblknewSig}.

\begin{lemma}\label{le:ESblklemma}
Let $(\~{\newc};\~{\newd})$ be as constructed in \rf{ESblknewcd}. In the block structure of $\newSig$ in \FGref{ESblkSig}, 
if a position $(i,j)$ in $\newSig$ is below the block diagonal, then $\newd_j - \newc_i>\newsij{i}{j}$; otherwise, $\newd_j - \newc_i\ge\newsij{i}{j}$.
\end{lemma}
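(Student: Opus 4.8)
The plan is to verify the inequality position-by-position using the case split (a)--(d) for the block submatrix $\newSig_{\idv_1\idv_2}$ already set up in the text, together with the explicit formulas \rf{ESblknewcd} for $\~{\newc}$ and $\~{\newd}$. The key observation is that the original offset pair $\offsetpair$ is valid for $\Sig$, so $d_j-c_i\ge\sij{i}{j}$ everywhere and $d_j-c_i>\sij{i}{j}$ below the block diagonal of $\Sig$ by \rf{djciblk}; and the conditions \rf{ESblkcond} control precisely how much the highest-order-of-occurrence of a variable can grow under the substitutions \rf{ESblksubs}. So in each region I would translate ``position $(i,j)$ in $\newSig$'' back to the corresponding original indices $(i',j')$ (shifting by $s$ when $i$ or $j$ exceeds $\idQ$, or reading off the new $y$-rows/columns) and compare $\newd_j-\newc_i$ with $\newsij{i}{j}$.

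The routine regions are (a) and (b): there no substitution occurs, so $\newsij{i}{j}=\sij{i'}{j'}$ (or $-\infty$ for the appended $y$-columns), and $\newd_j-\newc_i=d_{j'}-c_{i'}$ by construction of \rf{ESblknewcd}, so the claim is immediate from \rf{djciblk} applied to $\Sig$. The $g_r$-rows (region (c) and the diagonal part of (d)): here $g_r=-y_r+x_r^{(d_r-\ESconst)}-(\hv_r/\hv_\indxk)x_\indxk^{(d_\indxk-\ESconst)}$, so the only variables occurring are $y_r$, $x_r$, $x_\indxk$ with orders $0$, $d_r-\ESconst$, $d_\indxk-\ESconst$ respectively; since $\newc_{(g_r)}=\ESconst$ and $\newd$ on the corresponding columns is $d_r$, $d_\indxk$, $\ESconst$, each difference $\newd_j-\newc_i$ matches exactly the order of occurrence (an equality, hence ``$\ge$''), and every position to the left of the $x_\indxk$-column carries $\newsij{i}{j}=-\infty$, giving strict inequality below the block diagonal. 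The heart of the argument is region (d) restricted to the original $\newf_i$-rows ($i\in B_\idq$) and in case (c): after substituting $x_j^{(\sij{i}{j})}$ by $(y_j+(\hv_j/\hv_\indxk)x_\indxk^{(d_\indxk-\ESconst)})^{(\ESconst-c_i)}$, the new equation $\newf_i$ sees $x_\indxk$ to order $\sij{i}{\indxk}$ or to order $(d_\indxk-\ESconst)+(\ESconst-c_i)=d_\indxk-c_i$, whichever is larger; it sees $y_j$ to order $\ESconst-c_i$; and it sees the other variables $x_{j'}$ with $j'\notin\nzsetES$ to order at most $\max(\sij{i}{j'},\hoder{x_{j'}}{\vecv}+(\ESconst-c_i))$ because $\vecv$ enters through the coefficient $\hv_j/\hv_\indxk$ only when $\hv_j$ is non-constant --- and here is where the two alternatives of the first line of \rf{ESblkcond} and the definition of $\eqsetIES$ do the work: for $j'\in B_\idq\setminus\nzsetES$ one uses $\hoder{x_{j'}}{\vecv}\le d_{j'}-\ESconst$, and for $\blk{j'}>\idq$ one uses the other bound in \rf{ESblkcond}, in each case giving $\newd_{j'}-\newc_i\ge\newsij{i}{j'}$ (strict when the block position is below the diagonal, using \rf{djciblk} for the $\sij{i}{j'}$-term and $\hoder{x_{j'}}{\vecv}<d_{j'}-\ESconst$ for that term).

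I expect the main obstacle to be the bookkeeping inside region (d): getting the internal (sub)block structure of $\newSig_{\idq\idq}$ right --- the $y$-rows $g_r$, the $y$-columns, the $\newf_i$-rows, the $x$-columns --- and checking that along the $x_\indxk$-column the substitution does not raise the order above $d_\indxk-c_i$ when $c_i>\ESconst$ would be impossible since $\ESconst=\max_{i\in\eqsetIES}c_i$, while when $c_i\le\ESconst$ the exponent $\ESconst-c_i\ge0$ is legitimate and $\sij{i}{\indxk}\le d_\indxk-c_i$ holds because $(\~c_\idq;\~d_\idq)$ is valid for $\Sig_{\idq\idq}$ by \LEref{BTF4.2}. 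A secondary subtlety is the appended diagonal equation $g_\indxk=-y_\indxk+x_\indxk^{(d_\indxk-\ESconst)}$ from assumption (b): one must check it fits the block structure of \FGref{ESblkSig} with $\newc_{(g_\indxk)}=\ESconst$ and contributes only equalities. Once the order-of-occurrence estimate is in hand for every $(i,j)$, comparison with \rf{ESblknewcd} is mechanical. Since the paper defers the full argument to Appendix~\scref{proofESblknewSig}, I would present exactly this case analysis there, leading with regions (a)--(c) and the $g_r$-rows as warm-up and devoting the bulk to region (d).

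\begin{proof}[Proof sketch]
We check the inequality separately in the four regions (a)--(d) for $\newSig_{\idv_1\idv_2}$ identified above, translating each position $(i,j)$ of $\newSig$ back to an original position via the index shifts in \rf{ESblknewcd} and using that $\offsetpair$ is valid for $\Sig$, so that $d_j-c_i\ge\sij{i}{j}$ everywhere and $d_j-c_i>\sij{i}{j}$ below the block diagonal by \rf{djciblk}. In regions (a) and (b) no substitution occurs, so $\newsij{i}{j}$ either equals the corresponding $\sij{i'}{j'}$ or is $-\infty$ (the appended $y$-columns), while $\newd_j-\newc_i=d_{j'}-c_{i'}$; the claim is then immediate from \rf{djciblk}. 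For the rows $g_r$, $r\in\nzsetES$ (part of regions (c) and (d)), the equation involves only $y_r$, $x_r$, $x_\indxk$ to orders $0$, $d_r-\ESconst$, $d_\indxk-\ESconst$, and since $\newc$ on that row is $\ESconst$ the relevant differences $\newd_j-\newc_i$ match these orders exactly; all positions left of the $x_\indxk$-column have $\newsij{i}{j}=-\infty$, so the inequality is strict wherever the block position lies below the block diagonal and an equality otherwise. Finally, in the rows $\newf_i$ with $i\in B_\idq$ (the rest of regions (c) and (d)) the substitution \rf{ESblksubs} changes $\hoder{x_j}{\newf_i}$ to at most $\max\bigl(\sij{i}{j},\ (d_\indxk-c_i)\ \text{along the}\ x_\indxk\text{-column}\bigr)$, to $\ESconst-c_i$ along each $y_j$-column, and to at most $\max\bigl(\sij{i}{j'},\ \hoder{x_{j'}}{\vecv}+(\ESconst-c_i)\bigr)$ along the remaining $x_{j'}$-columns; comparing with $\newd_j-\newc_i$ from \rf{ESblknewcd} and invoking \rf{ESblkcond} (the strict bound when $j'\in\nzsetES$ or $\blk{j'}<\idq$, the non-strict one when $j'\in B_\idq\setminus\nzsetES$ or $\blk{j'}>\idq$), together with $\sij{i}{\indxk}\le d_\indxk-c_i$ from \LEref{BTF4.2} and $\ESconst\ge c_i$ for $i\in\eqsetIES$, yields $\newd_j-\newc_i\ge\newsij{i}{j}$, strict exactly when $(i,j)$ is below the block diagonal of $\newSig$. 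This covers all positions and proves the lemma; the detailed subcase bookkeeping inside region (d) is carried out in Appendix~\scref{proofESblknewSig}. \qed
\end{proof}
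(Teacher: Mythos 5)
Your overall route is the same as the paper's: split by the block regions of \FGref{ESblkSig}, dispose of all rows outside block $\idq$ via \rf{djciblk} (no substitution there, and the appended $y$-columns are $-\infty$), and for the rows belonging to block $\idq$ use the ES conditions \rf{ESblkcond} to control how much the order of occurrence of each $x_j$ can grow under \rf{ESblksubs}, comparing against \rf{ESblknewcd}. The only structural difference is in the diagonal block $\newSig_{\idq\idq}$: the paper does not redo the estimates there, but observes that the conversion restricted to block $\idq$ is exactly the basic ES method applied to the sub-DAE $\bigl(\Sig_{\idq\idq},(\~c_\idq;\~d_\idq)\bigr)$ (justified by \LEref{BTF4.2}) and quotes the corresponding lemma of \cite{tgn2015c}, whereas you re-derive the bounds directly; either is acceptable.

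There is, however, one genuine slip in your treatment of the rows $g_r$, $r\in\nzsetES$: you assert that $g_r$ involves only $y_r$, $x_r$ and $x_\indxk$, so that every below-block-diagonal entry in a $g_r$-row is $-\infty$ and strictness is automatic. That is true only when $\~\hv$ is constant. In general $\~\hv\in\ker{\Jblk{\idq}{\idq}}$ depends on the state, so the coefficient $v_r/v_\indxk$ in $g_r$ can bring in derivatives of variables $x_j$ with $\blk{j}<\idq$ (or any other $x_j$), and the corresponding $\newsij{i}{j}$ is then finite, so your step would fail for non-constant $\~\hv$. The correct argument --- the one the paper uses in \rf{less3a} and \rf{less3b}, and which you yourself apply to the $\newf_i$-rows --- is $\hoder{x_j}{g_r}\le\hoder{x_j}{\vecv}$, which by \rf{ESblkcond} is $<d_j-\ESconst$ when $\blk{j}<\idq$ or $j\in\nzsetES$, and $\le d_j-\ESconst$ otherwise; since $\newc_i=\ESconst$ on the $g_r$-rows, this yields exactly the required strict/non-strict inequalities. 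You already have this bound in hand, so the repair is a one-line change, but as written the $g_r$-row case is not proved.
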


Using this lemma, we can now prove \THref{ESblk}.
\begin{proof}
Let $\newT$ be a transversal of $\newSig$. Using \LEref{ESblklemma} and  \rf{ESblknewcd}, we derive
\begin{alignat*}{3}
\val{\newSig} &
= \sum_{(i,j)\in \newT} \newsij{i}{j} 
\le \sum_{(i,j)\in \newT}(\newd_j-\newc_i) 
=\sum_{j=1}^{n+s} \newd_j - \sum_{i=1}^{n+s} \newc_i
\\&= \left(\sum_{j=1}^{Q} d_j + s\ESconst + \sum_{j=Q+s+1}^{n+s} d_{j-s}\right)
- \left(\sum_{j=1}^{Q} c_i + s\ESconst + \sum_{i=Q+s+1}^{n+s} c_{i-s}\right)
\\&= \sum_{j=1}^{n} d_j - \sum_{i=1}^{n} c_i 
= \val{\Sig}\;.
\end{alignat*}

As in the proof of \THref{LCblk}, we prove $\val{\newSig}<\val{\Sig}$ by contradiction. Assume $\val{\newSig}=\sum_j^{n+s} \newd_j - \sum_i^{n+s} \newc_i = \val{\Sig}$, which is $\ge0$. With \LEref{ESblklemma}, the three conditions in \LEref{blockstructure} are satisfied. Then it follows from \LEref{blockstructure} that
\begin{enumerate}[(a)]
\item $(\~\newc;\~\newd)$ is a valid offset pair of $\newSig$;
\item the Jacobian pattern $\overline{\~S}_0$, derived from $\newSig$ and $(\~\newc;\~\newd)$, is in the $p\times p$ BTF shown in \FGref{ESblkSig};
\item $\newT$ is the union of HVTs $\newT_\idv$ of all diagonal blocks $\newSig_{11},\ldots,\newSig_{pp}$ of $\newSig$. 
\end{enumerate}

We can consider block $\idq$ of the original DAE as a sub-DAE, with signature matrix $\Sig_{\idq\idq}$ and offset pair $(\~c_\idq; \~d_\idq)$---this follows from \LEref{BTF4.2}. The conversion described in \THref{ESblk} can be regarded as an application of the basic ES method to this sub-DAE, given that the ES conditions \defterm{\EScondEqnum} hold because of \rf{ESblkcond}. By \cite[Theorem 4.2]{tgn2015c} for the basic ES method, a conversion results in $\val{\newSig_{\idq\idq}}<\val{\Sig_{\idq\idq}}$.  Also, since $\newSig_{\idv\idv}=\Sig_{\idv\idv}$ for $\idv\neq\idq$, $\val{\newSig_{\idv\idv}}=\val{\Sig_{\idv\idv}}$. 
Then a contradiction $\val\newSig<\val\Sig$ follows by the same derivations as in \rf{sumdiag}. 
The assumption $\val\newSig=\val\Sig$ is hence false, so $\val\newSig<\val\Sig$ holds. \qed
\end{proof}

\section{More examples.}\label{sc:exam}
We demonstrate how to apply the block conversion methods on two DAE problems originated from electrical circuit analysis \cite{TestSetIVP}. They are the transistor amplifier and the ring modulator. We describe them  in \SSCref{transamp} and \SSCref{ringmod}, respectively.

\subsection{Transistor amplifier.}\label{ssc:transamp}
The transistor amplifier DAE is
\begin{equation*}
\begin{alignedat}{7}
0&=f_1&&=&&C_1(x'_1-x'_2) + R_0^{-1}(x_1-U_e(t)) \\
0&=f_2&&=-&&C_1(x'_1-x'_2) - R_2^{-1}U_b + x_2\left(R_1^{-1}+R_2^{-1}\right) -(\alpha-1)g(x_2-x_3)\\
0&=f_3&&= &&C_2x'_3 - g(x_2-x_3) + R_3^{-1}x_3\\
0&=f_4&&= &&C_3(x'_4-x'_5) + R_4^{-1}(x_4-U_b) + \alpha g(x_2-x_3)\\
0&=f_5&&= -&&C_3(x'_4-x'_5) - R_0^{-1}U_b + x_5\left(R_5^{-1}+R_6^{-1}\right) -(\alpha-1)g(x_5-x_6)\\
0&=f_6&&= &&C_4x'_6 - g(x_5-x_6) + R_7^{-1}x_6 \\
0&=f_7&&= &&C_5(x'_7-x'_8) + R_8^{-1}(x_7-U_b) + \alpha g(x_5-x_6)\\
0&=f_8&&= -&&C_5(x'_7-x'_8) + R_9^{-1}x_8\;.
\end{alignedat}
\end{equation*}
{\scriptsize
\begin{align*}
\Sig =
\begin{blockarray}{r@{\hspace{1mm}} *8{@{\hspace{0.5mm}} c @{\hspace{0.5mm}}} @{\hspace{0.5mm}}c}
 & x_{1} & x_{2} & x_{3} & x_{4} & x_{5} & x_{6} & x_{7} & x_{8} & \s{c_i} \\
\begin{block}{r @{\hspace{2mm}}[*8{@{\hspace{0.5mm}} c@{\hspace{0.5mm}}} @{\hspace{0.5mm}}] c}
 f_{1}&\phanbull1^\bullet&1&\dbar&&&&&&\s{0}\\
 f_{2}&1&\phanbull1^\bullet&\lbar{\OK{0}}&&&&&&\s{0}\\\cline{2-4}
 f_{3}&&\OK{0}&\lbar{\phanbull1^\bullet}&\dbar&&&&&\s{0}\\\cline{4-6}
 f_{4}&&\OK{0}&\phanbull\OK{0}&\lbar{\phanbull1^\bullet}&1&\dbar&&&\s{0}\\
 f_{5}&&&&\lbar{1}&\phanbull1^\bullet&\lbar{\OK{0}}&&&\s{0}\\\cline{5-7}
 f_{6}&&&&&\OK{0}&\lbar{\phanbull1^\bullet}&\dbar&&\s{0}\\\cline{7-9}
 f_{7}&&&&&\OK{0}&\phanbull\OK{0}&\lbar{\phanbull1^\bullet}&1&\s{0}\\
 f_{8}&&&&&&&\lbar{1}&\phanbull1^\bullet&\s{0}\\
\end{block}
\s{d_j} &\s{1}&\s{1}&\s{1}&\s{1}&\s{1}&\phanbull\s{1}&\phanbull\s{1}&\s{1} \\
\end{blockarray}
\qquad\Jac =
\begin{blockarray}{r@{\hspace{1mm}} *8{@{\hspace{0.5mm}} c @{\hspace{0.5mm}}} @{\hspace{0.5mm}}c}
 & x'_{1} & x'_{2} & x'_{3} & x'_{4} & x'_{5} & x'_{6} & x'_{7} & x'_{8} & \s{c_i} \\
\begin{block}{r @{\hspace{2mm}}[*8{@{\hspace{0.5mm}} c@{\hspace{0.5mm}}} @{\hspace{0.5mm}}] c}
 f_{1}&\phantom{-}C_1&-C_1&\dbar&&&&&&\s{0}\\
 f_{2}&-C_1&\phantom{-}C_1&\dbar&&&&&&\s{0}\\\cline{2-4}
 f_{3}&&&\lbar{C_2}&\dbar&&&&&\s{0}\\\cline{4-6}
 f_{4}&&&&\lbar{\phantom{-}C_3}&\ \ -C_3&\dbar&&&\s{0}\\
 f_{5}&&&&\lbar{-C_3}&\ \ \phantom{-}C_3&\dbar&&&\s{0}\\\cline{5-7}
 f_{6}&&&&&&\lbar{C_4}&\dbar&&\s{0}\\\cline{7-9}
 f_{7}&&&&&&&\lbar{\phantom{-}C_5}&\ \ -C_5&\s{0}\\
 f_{8}&&&&&&&\lbar{-C_5}&\ \ \phantom{-}C_5&\s{0}\\
\end{block}
\end{blockarray}
\end{align*}
}In the equations, $g(y) = \beta\left( e^{y/U_F}-1\right)$ and $U_e(t) = 0.1\sin (200\pi t)$; $\alpha, \beta, U_b, U_F, R_0$, $R_k$ for $k=\rnge{1}{9}$, and $C_k$ for $k=\rnge{1}{5}$ are positive constants. The SA fails since $\Jac$ is identically singular. The fine BTF reveals that the three $2\times 2$ sub-Jacobians $\Jblk{1}{1}$, $\Jblk{3}{3}$, $\Jblk{5}{5}$ are identically singular and have a similar structure. Each block receives the same treatment when a conversion method is applied.

\noindent{\em LC method.}
One can easily find $\~\hu=[1,1]^T\in \coker{\Jblk{1}{1}}, \coker{\Jblk{3}{3}}, \coker{\Jblk{5}{5}}$. We perform on each singular block a conversion, and choose to replace the first equation in each such block.
\[
  \begin{array}{l@{\hspace{3mm}}l@{\hspace{3mm}}l}
\text{block} & \text{replace} & \text{by}   \\ \hline \\[-2ex]
1 & f_1 & \newf_1 = f_1+f_2 \\
3 & f_4 & \newf_4 = f_4+f_5 \\
5 & f_7 & \newf_7 = f_7+f_8 \\
  \end{array}
\]
The new equations in the resulting DAE are
\begin{equation*}
\begin{aligned}
0=\newf_1&= R_0^{-1}(x_1-U_e(t)) - R_2^{-1}U_b + x_2\left( R_1^{-1} + R_2^{-1}\right) -(\alpha-1)g(x_2-x_3) \\
0=\newf_4 &= R_4^{-1}(x_4-U_b) + \alpha g(x_2-x_3)- R_5^{-1}U_b + x_5\left(R_5^{-1}+R_6^{-1}\right)  \\
&\phantom{=R_4^{-1}(x_4\;}
-(\alpha-1)g(x_5-x_6)\\
0=\newf_7&= R_8^{-1}(x_7-U_b) + \alpha g(x_5-x_6) + R_9^{-1}x_8\;.
\end{aligned}
\end{equation*}
{\scriptsize
\begin{align*}
\begin{blockarray}{r@{\hspace{1mm}} *8{@{\hspace{0.5mm}} c @{\hspace{0.5mm}}} @{\hspace{0.5mm}}cc}
 & x_{7} & x_{8} & x_{4} & x_{5} & x_{6} & x_{1} & x_{2} & x_{3} & \s{c_i}  \\
\begin{block}{r @{\hspace{2mm}}[*8{@{\hspace{0.5mm}} c@{\hspace{0.5mm}}} @{\hspace{0.5mm}}] cc}
 \newf_{7}&0^\bullet&0\phanbull&\dbar&0&\phanbull0&&&&\s{1}\\
 f_{8}&1\phanbull&1^\bullet&\dbar&&&&&&\s{0}\\\cline{2-5}
 \newf_{4}&&&\lbar{0^\bullet}&0&\lbar{0}&&0&0 &\s{1}\\
 f_{5}&&&\lbar{1\phanbull}&\phanbull1^\bullet&\lbar{\OK{0}}&&&&\s{0}\\\cline{4-6}
 f_{6}&&&&\OK{0}&\lbar{\phanbull1^\bullet}&\dbar&&&\s{0}\\\cline{6-8}
 \newf_{1}&&&&&&\lbar{\phanbull0^\bullet}&0&\lbar{0\phanbull }&\s{1}\\
 f_{2}&&&&&&\lbar{1}&\phanbull1^\bullet&\lbar{\OK{0}\phanbull }&\s{0}\\\cline{7-9}
 f_{3}&&&&&&&\OK{0}&\lbar{\ \ 1^\bullet \ \ }&\s{0}\\
\end{block}
\s{d_j} &\s{1}&\s{1}&\s{1}&\s{1}&\s{1}&\s{1}&\s{1}&\s{1}
\end{blockarray} 
\begin{blockarray}{r *8{@{\hspace{1mm}} c @{\hspace{1mm}}} c}
 &x'_{7} &x'_{8} &x'_{4} &x'_{5} &x'_{6} &x'_{1} &x'_{2} &x'_{3} \\[0.5ex]
\begin{block}{r @{\hspace{1mm}}[*8{@{\hspace{1mm}} c@{\hspace{1mm}}} ]c}
 \newf'_{7}&R_8^{-1}&R_9^{-1}&\lbar{}& \pp{\newf_7}{x_5} & \pp{\newf_7}{x_6} &&&\\
 f_{8}&-C_5&C_5&\lbar{}&&&&&\\ \cline{2-5} 
 &&&\lbar{}&&\lbar{}\\[-2ex]
 \newf'_{4}&&&\lbar{\ \ R_4^{-1}\ \ }&R_5^{-1}+R_6^{-1}&\dbar{$\pp{\newf_4}{x_6}$}&&\pp{\newf_4}{x_2}&\pp{\newf_4}{x_3}\\
 f_{5}&&&\lbar{-C_3}&C_3&\dbar&&&\\\cline{4-6}
 f_{6}&&&&&\lbar{C_4}&\dbar&&\\\cline{6-8}
  &&&&&&\lbar{}&&\lbar{}\\[-2ex]
 \newf'_{1}&&&&&&\lbar{\ \ R_0^{-1}\ \ }&R_1^{-1}+R_2^{-1}&\dbar{$\pp{\newf_1}{x_3}$}\\
 f_{2}&&&&&&\lbar{-C_1}&C_1&\dbar\\\cline{7-9}
 f_{3}&&&&&&&&\lbar{\ C_2\ \ }\\
\end{block}
\end{blockarray}
\end{align*}
}
The SA still reports index 1, and succeeds with a nonzero constant $\det(\newJ)$:
\[
\det(\newJ) = C_1C_2C_3C_4C_5
\left(R_0^{-1}+R_1^{-1}+R_2^{-1}\right)
\left(R_4^{-1}+R_5^{-1}+R_6^{-1}\right)
\left(R_8^{-1}+R_9^{-1}\right)  \neq 0\;.
\]
Now $\val\newSig=5<8=\val\Sig$.

\medskip
\noindent{\em ES method.} We can take $\~\hv=[1,1]^T\in \ker{\Jblk{1}{1}}, \ker{\Jblk{3}{3}}, \ker{\Jblk{5}{5}}$. We show how to perform a conversion on block 1; block 3 and block 5 can be treated in the same way.

For block 1, we construct the corresponding $\vecv=[1,1,\~0_8^T]^T$. Using \rf{ESblkdef}, we have
\begin{align*}
\nzsetES = \newJes = \setbg{ j \mid v_j \nequiv 0} = \set{1,2},\quad 
s=|\nzsetES|=2,\quad
\eqsetIES = \set{1,2},
\qaq
\ESconst &= 0\;.
\end{align*}
We choose $\indxk=1\in\newJes$, introduce for $x_2$ a new variable
\[
y_2 = x_2^{(d_2-\ESconst)} - \frac{v_2}{v_1} \cdot x_1^{(d_1-\ESconst)}  = x'_2-x'_1\;,
\]
and append correspondingly the equation $0=h_2=-y_2+x'_2-x'_1$. Then we replace $x'_2$ by $y_2+x'_1$ in $f_1, f_2$.

After we complete similar conversions on block 3 and block 5, the resulting DAE has equations $f_3, f_6$ and the following equations:
\begin{equation*}
\begin{alignedat}{7}
0&=\newf_1&&=-&&C_1 y_2 + R_0^{-1}(x_1-U_e(t)) \\
0&=\newf_2&&=&&C_1 y_2 - R_2^{-1}U_b + x_2\left(R_1^{-1}+R_2^{-1}\right) -(\alpha-1)g(x_2-x_3)\\
0&= h_2 &&=-&&y_2+x'_2-x'_1 \\ 
0&=\newf_4&&= -&&C_3y_5 + R_4^{-1}(x_4-U_b) + \alpha g(x_2-x_3)\\
0&=\newf_5&&= &&C_3y_5 - R_5^{-1}U_b + x_5\left(R_5^{-1}+R_6^{-1}\right) -(\alpha-1)g(x_5-x_6)\\
0&= h_5 &&=-&&y_5+x'_5-x'_4 \\ 
0&=\newf_7&&= -&&C_5 y_8 + R_8^{-1}(x_7-U_b) + \alpha g(x_5-x_6)\\
0&=\newf_8&&=  &&C_5 y_8 + R_9^{-1}x_8 \\
0&= h_8 &&=-&&y_8+x'_8-x'_7\;.
\end{alignedat}
\end{equation*}
{\scriptsize
\begin{align*}
\begin{alignedat}{3}
\begin{blockarray}{r@{\hspace{1mm}}  *{11}{@{\hspace{0.5mm}} c@{\hspace{0.5mm}} } @{\hspace{0.5mm}}  cc}
 & x_{7} & x_{8} & y_{8} & x_{4} & x_{5} & y_{5} & x_{6} & x_{1} & x_{2} & y_{2} & x_{3} & \s{c_i} \\
\begin{block}{r@{\hspace{1mm}}   [*{11}{@{\hspace{0.5mm}} c@{\hspace{0.5mm}} }] @{\hspace{0.5mm}} cc}
 h_{8}&\phanbull1^\bullet&1&\OK{0}&\dbar&&&&&&&&\s{0}\\
 \newf_{8}&&\phanbull0^\bullet&0&\dbar&&&&&&&&\s{1}\\
 \newf_{7}&0&&\phanbull0^\bullet&\dbar&0&&0&&&&&\s{1}\\\cline{2-7}
 h_{5}&&&&\lbar{\phanbull1^\bullet}&1&\OK{0}&\dbar&&&&&\s{0}\\
 \newf_{5}&&&&\dbar&\phanbull0^\bullet&0&\lbar{0}&&&&&\s{1}\\
 \newf_{4}&&&&\lbar{0}&&\phanbull0^\bullet&\dbar&&0&&0&\s{1}\\\cline{5-8}
 f_{6}&&&&&\OK{0}&&\lbar{\phanbull1^\bullet}&\dbar&&&&\s{0}\\\cline{8-11}
 h_{2}&&&&&&&&\lbar{\phanbull1^\bullet}&1&\OK{0}&\dbar&\s{0}\\
 \newf_{2}&&&&&&&&\dbar&\phanbull0^\bullet&0&\lbar{0}&\s{1}\\
 \newf_{1}&&&&&&&&\lbar{0}&&\phanbull0^\bullet&\dbar&\s{1}\\\cline{9-12}
 f_{3}&&&&&&&&&\OK{0}&&\lbar{\ \phanbull1^\bullet\ \ }&\s{0}\\
\end{block}
\s{d_j} &\s{1}&\s{1}&\s{1}&\s{1}&\s{1}&\s{1}&\s{1}&\s{1}&\s{1}&\s{1}&\s{1} \\
\end{blockarray}
\end{alignedat}
\begin{alignedat}{3}
\begin{blockarray}{r@{\hspace{1mm}}  *{11}{@{\hspace{0.5mm}} c@{\hspace{0.5mm}} } @{\hspace{0.5mm}}  cc}
 &x'_{7} &x'_{8} &y'_{8} &x'_{4} &x'_{5} &y'_{5} &x'_{6} &x'_{1} &x'_{2} &y'_{2} &x'_{3} \\
 \begin{block}{r@{\hspace{1mm}}   [*{11}{@{\hspace{0.5mm}} c@{\hspace{0.5mm}} }] @{\hspace{0.5mm}} cc}
 h_{8}&-1&1&&\dbar&&&&&&&\\
 \newf'_{8}&&R_9^{-1}&\phantom{-}C_5&\dbar&&&&&&&\\
 \newf'_{7}&R_8^{-1}&&-C_5&\dbar&\pp{\newf_7}{x_5}&&\pp{\newf_7}{x_6}&&&&\\\cline{2-7}
 h_{5}&&&&\lbar{-1}&1&&\dbar&&&&\\
 \newf'_{5}&&&&\dbar&\pp{\newf_5}{x_5}&\phantom{-}C_3&\lbar{\pp{\newf_5}{x_6}}&&&&\\
 \newf'_{4}&&&&\lbar{R_4^{-1}}&&-C_3&\dbar&&\pp{\newf_4}{x_2}&&\pp{\newf_4}{x_3}\\\cline{5-8}
 f_{6}&&&&&&&\lbar{C_4}&\dbar&&&\\\cline{8-11}
 h_{2}&&&&&&&&\lbar{-1}&1&&\dbar\\
 \newf'_{2}&&&&&&&&\dbar&\pp{\newf_2}{x_2}&\phantom{-}C_1&\lbar{\ \ \pp{\newf_2}{x_3}\ \ }\\
 \newf'_{1}&&&&&&&&\lbar{R_0^{-1}}&&-C_1&\dbar\\\cline{9-12}
 f_{3}&&&&&&&&&&&\lbar{C_3}\\
\end{block}
\end{blockarray}
\end{alignedat}
\end{align*}
}
Here in $\newJac$, $\ppin{\newf_5}{x_5}=R_5^{-1}+R_6^{-1}$ and $\ppin{\newf_2}{x_2}=R_1^{-1}+R_2^{-1}$. The SA succeeds with a nonzero constant $\det(\newJac)$ and $\val\newSig=5<8=\val\Sig$.

\subsection{Ring modulator.}\label{ssc:ringmod}

We study the ring modulator problem from \cite{TestSetIVP}. When $C_s\neq 0$, it is a stiff ODE system  of  15 nonlinear equations.
Setting $C_s=0$ gives a DAE of differentiation index 2 that consists of 11 differential and 4 algebraic equations:
\begin{alignat*}{7}
0&= f_1 &&= -&&x'_1 + C^{-1}\bigl( x_8-0.5x_{10}+0.5x_{11}+x_{14}-R^{-1}x_1 \bigr)\\
0&= f_2 &&= -&&x'_2 + C^{-1}\bigl( x_9-0.5x_{11}+0.5x_{12}+x_{15}-R^{-1}x_2\bigr)\\
0&= f_3 &&= &&x_{10}-q(U_{D1})+q(U_{D4})\\
0&= f_4 &&= -&&x_{11}+q(U_{D2})-q(U_{D3}) \\
0&= f_5 &&= &&x_{12}+q(U_{D1})-q(U_{D3})\\
0&= f_6 &&= -&&x_{13}-q(U_{D2})+q(U_{D4})\\
0&= f_7 &&= -&&x'_7 + C_p^{-1}\bigl(-R_p^{-1}x_7+q(U_{D1})+q(U_{D2})-q(U_{D3})-q(U_{D4}) \bigr)
\\0&= f_8 &&= -&&x'_8 + -L_h^{-1}x_1\\
0&= f_9 &&= -&&x'_9 + -L_h^{-1}x_2\\
0&= f_{10} &&= -&&x'_{10} + L_{s2}^{-1}(0.5x_1-x_3-R_{g2}x_{10}) \\
0&= f_{11} &&= -&&x'_{11} + L_{s3}^{-1}(-0.5x_1+x_4-R_{g3}x_{11}) \\
0&= f_{12} &&= -&&x'_{12} + L_{s2}^{-1}(0.5x_2-x_5-R_{g2}x_{12})\\ 
0&= f_{13} &&= -&&x'_{13} + L_{s3}^{-1}(-0.5x_2+x_6-R_{g3}x_{13})\\
0&= f_{14} &&= -&&x'_{14} + L_{s1}^{-1}(-x_1+U_{in1}(t)-(R_i+R_{g1})x_{14})\\
0&= f_{15} &&= -&&x'_{15} + L_{s1}^{-1}(-x_2-(R_c+R_{g1})x_{15})\;.
\end{alignat*}
The functions are
\[
\begin{alignedat}{5}
U_{D1} &= &&x_3-x_5-x_7-U_{in2}(t)
&\qquad\qquad&  &q(U)&= \gamma (e^{\delta U}-1)
\\
U_{D2} &= -&&x_4+x_6-x_7-U_{in2}(t)
&& &U_{in1}(t)&= 0.5 \sin (2000\pi t)
\\
U_{D3} &= &&x_4+x_5+x_7+U_{in2}(t)
&&  &U_{in2}(t)&= 2 \sin (20000\pi t)
\\
U_{D4} &= -&&x_3-x_6+x_7+U_{in2}(t)\;.
\end{alignedat}
\]
We refer the reader to \cite{TestSetIVP} for the nonzero constants $C$, $C_p$, $R$, $R_p$, $R_c$, $\gamma$, $L_h$, $L_{s1}$, $L_{s2}$, $L_{s3}$, $R_{g1}$, $R_{g2}$, $R_{g3}$, $R_{i}$, and $\delta$.
\begin{equation*}
{
\Sig =
\begin{blockarray}{r @{\hspace{2mm}} *{15}{@{\hspace{1mm}} c @{\hspace{1mm}}}  @{\hspace{1mm}}cc}
 & x_{1} & x_{2} & x_{7} & x_{13} & x_{11} & x_{12} & x_{10} & x_{3} & x_{4} & x_{5} & x_{6} & x_{8} & x_{9} & x_{14} & x_{15} & \s{c_i}  \\
\begin{block}{r @{\hspace{2mm}}  [*{15}{@{\hspace{1mm}} c @{\hspace{1mm}}}] @{\hspace{1mm}}cc}
 f_{1}&1^\bullet&\dbar&&&\OK{0}&&\OK{0}&&&&&\OK{0}&&\OK{0}&&\s{0}\\\cline{2-3}
 f_{2}&&\lbar{1^\bullet}&\dbar&\OK{0}&&\OK{0}&&&&&&&\OK{0}&&\OK{0}&\s{0}\\\cline{3-4}
 f_{7}&&&\lbar{1^\bullet}&\dbar&&&&0&0&0&0&&&&&\s{0}&\\\cline{4-5}
 f_{13}&&\OK{0}&&\lbar{1^\bullet}&\dbar&&&&&&0&&&&&\s{0}\\\cline{5-6}
 f_{11}&\OK{0}&&&&\lbar{1^\bullet}&\dbar&&&0&&&&&&&\s{0}\\\cline{6-7}
 f_{12}&&\OK{0}&&&&\lbar{1^\bullet}&\dbar&&&0&&&&&&\s{0}\\\cline{7-8}
 f_{10}&\OK{0}&&&&&&\lbar{1^\bullet}&\lbar{0}&&&&&&&&\s{0}\\\cline{8-12}
 f_{3}&&&\OK{0}&&&&\OK{0}&\lbar{0^\bullet}&&0&0&\dbar&&&&\s{0}\\
 f_{4}&&&\OK{0}&&\OK{0}&&&\dbar&0^\bullet&0&0&\dbar&&&&\s{0}\\
 f_{5}&&&\OK{0}&&&\OK{0}&&\lbar{0}&0&0^\bullet&&\dbar&&&&\s{0}\\
 f_{6}&&&\OK{0}&\OK{0}&&&&\lbar{0}&0&&0^\bullet&\dbar&&&&\s{0}\\\cline{9-13}
 f_{8}&\OK{0}&&&&&&&&&&&\lbar{1^\bullet}&\dbar&&&\s{0}\\\cline{13-14}
 f_{9}&&\OK{0}&&&&&&&&&&&\lbar{1^\bullet}&\dbar&&\s{0}\\\cline{14-15}
 f_{14}&\OK{0}&&&&&&&&&&&&&\lbar{1^\bullet}&\dbar&\s{0}\\\cline{15-16}
 f_{15}&&\OK{0}&&&&&&&&&&&&&\lbar{1^\bullet}&\s{0}\\
\end{block}
\s{d_j} &\s{1}&\s{1}&\s{1}&\s{1}&\s{1}&\s{1}&\s{1}&\s{0}&\s{0}&\s{0}&\s{0}&\s{1}&\s{1}&\s{1}&\s{1}
\end{blockarray}
}
\end{equation*}
Each block of size 1 has a nonsingular sub-Jacobian $-1$. 
Block 8 has an identically singular sub-Jacobian
\[
\Jblk{8}{8} = 
\begin{blockarray}{r@{\hspace{2mm}} *4{@{\hspace{1mm}} c@{\hspace{1mm}}}}
 & x_{3} &x_{4} &x_{5} &x_{6}  \\
\begin{block}{r @{\hspace{2mm}}[*4{@{\hspace{1mm}} c@{\hspace{1mm}}}]}
f_3 &-s_1-s_4 &  & s_1 & -s_4 \\
f_4 & &-s_2-s_3 & -s_3 & s_2\\
f_5 & s_1 & -s_3&-s_1-s_3&\\
f_6 & -s_4 & s_2 & & -s_2-s_4\\
\end{block}
\end{blockarray}\;, \quad\text{where}\quad s_i = \gamma\delta e^{\delta U_{Di}}\;.
\]
 This is a nonlinear block, since variables $x_3, x_4, x_5, x_6$ do not occur jointly linearly in equations $f_3, f_4, f_5, f_6$. One can also see these variables appear in $\Jblk{8}{8}$.

\medskip
\noindent{\em LC method.}
We find a constant vector $\~\hu = [1,-1,1,-1]^T\in\coker{\Jblk{8}{8}}$, which satisfies the block LC condition \rf{LCblkcond}. Then $\vecu = [\~0_7^T, 1,-1,1,-1, \~0_4^T]^T$. We use \rf{LCblkdef} to derive
\[
\eqsetI = \setbg{ i\mid\, u_i\nequiv 0\,} = \set{8,9,10,11},
\quad \LCconst=0, \qaq
\nzset = \newL =  \set{8,9,10,11} \;.
\]
The row indices in $\newL$ correspond to the equations $f_3, f_4, f_5, f_6$. We can pick any one of them and replace it by
\begin{align*}
\newf &= u_1f_3+u_2f_4+u_3f_5+u_4f_6 = f_3-f_4+f_5-f_6
= x_{10} + x_{11} + x_{12} + x_{13}\;.
\end{align*}
We choose $f_3$ and replace it by $\newf_3=\newf$. The resulting DAE has the following $\newSig$ with $\val\newSig=10<11=\val\Sig$.
\begin{equation*}
{
\newSig =
\begin{blockarray}{r @{\hspace{2mm}}*{15}{@{\hspace{1mm}}c@{\hspace{1mm}}} @{\hspace{1mm}}c}
 & x_{1} & x_{2} & x_{7} & x_{3} & x_{4} & x_{5} & x_{6} & x_{10} & x_{11} & x_{12} & x_{13} & x_{8} & x_{9} & x_{14} & x_{15} & \s{c_i}  \\
\begin{block}{r @{\hspace{2mm}}[*{15}{@{\hspace{1mm}}c@{\hspace{1mm}}}] @{\hspace{1mm}}c}
 f_{1}&1^\bullet&\dbar&&&&&&\OK{0}&\OK{0}&&&\OK{0}&&\OK{0}&&\s{0}\\\cline{2-3}
 f_{2}&&\lbar{1^\bullet}&\dbar&&&&&&&\OK{0}&\OK{0}&&\OK{0}&&\OK{0}&\s{0}\\\cline{3-4}
 f_{7}&&&\lbar{1^\bullet}&\lbar{0}&0&0&0&&&&&&&&&\s{0}\\\cline{4-12}
 f_{10}&\OK{0}&&&\lbar{0^\bullet}&&&&1&&&&\dbar&&&&\s{0}\\
 f_{5}&&&\OK{0}&\lbar{0}&0^\bullet&0&&&&\OK{0}&&\dbar&&&&\s{0}\\
 f_{4}&&&\OK{0}&\dbar&0&0^\bullet&0&&\OK{0}&&&\dbar&&&&\s{0}\\
 f_{6}&&&\OK{0}&\lbar{0}&0&&0^\bullet&&&&\OK{0}&\dbar&&&&\s{0}\\
  \newf_{3}&&&&\dbar&&&&0^\bullet&0&0&0&\dbar&&&&\s{1}\\
 f_{11}&\OK{0}&&&\dbar&0&&&&1^\bullet&&&\dbar&&&&\s{0}\\
 f_{12}&&\OK{0}&&\dbar&&0&&&&1^\bullet&&\dbar&&&&\s{0}\\
 f_{13}&&\OK{0}&&\dbar&&&0&&&&1^\bullet&\dbar&&&&\s{0}\\\cline{5-13}
 f_{8}&\OK{0}&&&&&&&&&&&\lbar{1^\bullet}&\dbar&&&\s{0}\\\cline{13-14}
 f_{9}&&\OK{0}&&&&&&&&&&&\lbar{1^\bullet}&\dbar&&\s{0}\\\cline{14-15}
 f_{14}&\OK{0}&&&&&&&&&&&&&\lbar{1^\bullet}&\dbar&\s{0}\\\cline{15-16}
 f_{15}&&\OK{0}&&&&&&&&&&&&&\lbar{1^\bullet}&\s{0}\\
\end{block}
\s{d_j} &\s{1}&\s{1}&\s{1}&\s{0}&\s{0}&\s{0}&\s{0}&\s{1}&\s{1}&\s{1}&\s{1}&\s{1}&\s{1}&\s{1}&\s{1} \\
\end{blockarray}
}
\end{equation*}

Again, each $1\times 1$ block has a nonsingular Jacobian:
$\ppin{f_i}{x'_i}=-1$ for $i=$1, 2, 7, 8, 9, 14, 15.
The sub-Jacobian of block 4 in the resulting DAE is
\begin{align*}
\newJblk{4}{4} = 
\begin{blockarray}{r@{\hspace{2mm}} *8{@{\hspace{1mm}} c @{\hspace{1mm}}}}
 &x_{3} &x_{4} &x_{5} &x_{6} &x'_{10} &x'_{11} &x'_{12} &x'_{13}  \\
\begin{block}{r @{\hspace{2mm}}[*8{@{\hspace{1mm}} c @{\hspace{1mm}}}]}
 f_{10}&-L_{s2}^{-1}&&&&-1&&&\\
 f_{5}&s_1 & -s_3&-s_1-s_3&&&&&\\
 f_{4}& &-s_2-s_3 & -s_3 & s_2&&&&\\
 f_{6}&-s_4 & s_2 & & -s_2-s_4&&&&\\
 \newf'_{3}&&&&&1&1&1&1\\
 f_{11}&&L_{s3}^{-1}&&&&-1&&\\
 f_{12}&&&-L_{s2}^{-1}&&&&-1&\\
 f_{13}&&&&L_{s3}^{-1}&&&&-1 \\
\end{block}
\end{blockarray}\;,
\end{align*}
whose determinant is
$
\det(\newJ_{44})=2s_1s_2s_3s_4
(s_1^{-1} + s_2^{-1} + s_3^{-1} + s_4^{-1})
(L_{s2}^{-1} + L_{s3}^{-1})
$.
The SA succeeds at any point where $\det(\newJ_{44})\neq0$, and the DAE is of index 2.


\medskip
\noindent{\em ES method.}
Find $\~\hv=[-1,1,-1,1]^T\in \ker{\Jblk{8}{8}}$. Then $\vecv=[\~0_7^T, -1, 1, -1, 1, \~0_4^T]^T$.
We use \rf{ESblkdef} to derive
\begin{align*}
\nzsetES = \newJes = \setbg{ j \mid v_j\nequiv 0 } = \set{8,9,10,11}, \quad 
s = |\nzsetES| = 4,
\quad \eqsetIES = \nzsetES,
\qaq \ESconst  = 0
\;.
\end{align*}

We choose column index $\indxk=8\in\newJes$ in the permuted $\Sig$. The variable of this column is $x_3$. The other variables in block 8 are $x_4$, $x_5$, $x_6$, so we introduce for them, respectively,
\[
y_4 = x_4 - (\fracin{v_9}{v_8})\cdot x_3, \quad 
y_5 = x_5 - (\fracin{v_{10}}{v_8})\cdot x_3, \qaq
y_6 = x_6 - (\fracin{v_{11}}{v_8})\cdot x_3\;.
\]
Then we append the equations corresponding to these variables 
\[
0 = g_4 = - y_4 + x_4 + x_3, \quad
0 = g_5 = - y_5 + x_5 - x_3, \qaq
0 = g_6 = - y_6 + x_6 + x_3\;.
\]
The equations in block 8 are $f_3$, $f_4$, $f_5$, $f_6$. In these equations, we perform the following substitutions.
\[
  \begin{array}{l@{\hspace{5mm}}l@{\hspace{5mm}}l}
 \text{replace} & \text{by} &\text{in}  \\ \hline
x_4 & y_4-x_3 & f_4, f_5, f_6 \\
x_5 & y_5+x_3 & f_3, f_4, f_5 \\
x_6 & y_6-x_3 & f_3, f_4, f_6 
  \end{array}
\]
The resulting index-2 DAE is of size 18. (We do not display the results of SA here.) It has $\val\newSig = 10 <11=\val\Sig$ and $\det(\newJ)=-2s_1s_2s_3s_4(s_1^{-1} + s_2^{-1} + s_3^{-1} + s_4^{-1})(L_{s2}^{-1} + L_{s3}^{-1})$. The largest fine block is of size 12, and the other six fine blocks are of size 1. The SA succeeds at any point where $\det(\newJ)\neq0$.

\section{Conclusions.}\label{sc:conclu}

We combined block triangularization with the LC and ES conversion methods for improving the \Sigmeth. When $\Jac$ is identically singular and the DAE has a nontrivial BTF, we can locate each diagonal block whose corresponding sub-Jacobian is identically singular, and perform a conversion on it. We base this strategy on the view that each diagonal block can be regarded as a sub-DAE, while formulas contributing to the off diagonal blocks are regarded as driving terms. 

Compared with the basic conversion methods that work on the whole DAE, the block methods only work on singular blocks, which are usually smaller than the DAE itself. Hence the block methods require fewer symbolic computations, and can generally find a useful conversion for reducing $\val\Sig$ more efficiently. As in the basic case, a conversion applied on a singular block guarantees (a) a strict decrease in the value of the (whole) signature matrix, and (b) the equivalence between the original DAE and the resulting one.  The rationale for choosing a desirable conversion method is in \cite[Table 4.1]{tgn2015c}. 

We combine \matlab's Symbolic Math Toolbox \cite{matlabsymbolictb} with our structural analysis software \daesa \cite{NedialkovPryce2012b,NedialkovPryce2012a}, and have built a prototype code that automates the conversion process. We aim to incorporate them in a future version of \daesa.

With our prototype code, we have applied our methods on numerous DAEs on which the \Sigmeth fails. They are either arbitrarily constructed to be SA-failure cases for our investigations, or borrowed from the existing literature. Our conversion methods succeed in fixing all these solvable DAEs.
We believe that our assumptions and conditions are reasonable for practical problems, and that these methods can help make the \Sigmeth more reliable. 

We end these two articles with our main conjecture regarding SA's failure. In all our experiments, when we successfully fix the failure using our conversion methods, the value of a signature matrix always decreases. As Pryce points out in \cite{Pryce98}, the solvability of a DAE lies within its inherent nature, not the way it is formulated or analyzed. Hence we conjecture that a DAE formulation friendly to SA should have a reasonable but never overestimated $\val\Sig$ that can be interpreted as the number of degrees of freedom (DOF) of the underlying problem. In other words, a DAE should not be formulated to exhibit more DOF than the underlying problem has. However, based on our current knowledge, it appears difficult to show why overestimating DOF can lead to an identically singular System Jacobian.

\appendix
\section{Proof of \LEref{ESblklemma}.}\label{sc:proofESblknewSig}

For $\newSig=(\newsij{i}{j})$ in the block structure in \FGref{ESblkSig}, we write the block sizes in the array 
\begin{align*}
\wN{} = (N_1,N_2,\ldots,N_{\idq-1},N_\idq+s,N_{\idq+1},\ldots,N_p)\;,
\end{align*}
and also write the block sizes of $\Sig$ in the array $N = (N_1,N_2,\ldots,N_{\idq-1},N_\idq,N_{\idq+1},\ldots,N_p)$.
Let $\wblk{i}$ denote the block number of a row or column index $i$ in $\newSig$. From the construction of $\wN{}$ and $N$, it is not difficult to show that
\begin{align}
\wblk{j}<\idq 
&\;\Leftrightarrow\; 1\le j\le \sum_{w=1}^{q-1} N_w 
\;\Leftrightarrow\; \blk{j}<\idq \quad\text{and}
\label{eq:blockOfj<q}
\\
\wblk{j+s}>\idq
&\;\Leftrightarrow\; \sum_{w=1}^{q} N_w +s+1\le j+s 
\;\Leftrightarrow\; \blk{j}>\idq\;.
\label{eq:blockOfj>q}
\end{align}

From the construction of $(\~\newc;\~\newd)$ in \rf{ESblknewcd}, each variable $x_j$ for $j=\rnge{1}{n}$ has the same ``variable offset'' in $\newSig$ as $x_j$ has in $\Sig$. Also, each equation $\newf_i$ for $i=\rnge{1}{n}$ has the same ``equation offset'' in $\newSig$ as $f_i$ has in $\Sig$. Quotation marks are used here because $(\~\newc;\~\newd)$ is {\em not} a valid offset pair of $\newSig$; this vector pair is merely used for proving $\val\newSig<\val\Sig$ in \THref{ESblk}.

We aim to show that
\begin{align}\label{eq:ESblkcd}
\newd_j - \newc_i\ 
\left\{
\begin{alignedat}{7}
&>\newsij{i}{j}     &\quad& \text{if $\wblk{j} < \wblk{i}$}\\
&\ge \newsij{i}{j}  && \text{if $\wblk{j} \ge \wblk{i}$}\;.
\end{alignedat}
\right.
\end{align}

For the block structure of $\newSig$ in \FGref{ESblkSig}, we have shown on page~\pageref{fg:ESblkSig} that $\newSig_{\idv_1 \idv_2}=\Sig_{\idv_1 \idv_2}$ if $\idv_1\neq \idq$ and $\idv_2\neq\idq$, and that $\newSig_{\idv_1 \idv_2}=\bigl[ \Sig_{\idv_1\idq},\ \bninfty_{N_{\idv_1}\times s} \bigr] $ if $\idv_1\neq \idq$ and $\idv_2=\idq$.
Hence, provided $\idv_1\neq \idq$, $\newSig_{\idv_1 \idv_2}$ is below [resp. above] the block diagonal of $\newSig$, if $\Sig_{\idv_1 \idv_2}$ is below [resp. above] the block diagonal of $\Sig$.
By \rf{djciblk}, the inequalities in \rf{ESblkcd} hold for $i$ with $\wblk{i}\neq\idq$.

What remains to show is the inequalities in \rf{ESblkcd} for $i$ with $\wblk{i}=\idq$. These inequalities are for the signature entries in  $\newSig_{\idq\idv_2}$, the blocks affected by the expression substitutions.  
We consider three cases for $\newSig_{\idq\idv_2}$: it is (a) below the block diagonal, with $\idv_2<\idq$, (b) above the block diagonal, with $\idv_2>\idq$, or (c) the diagonal block $\newSig_{\idq\idq}$, with $\idv_2=\idq$.

(a)\ \ $\newSig_{\idq \idv_2}$ with $\idv_2<\idq$. 
An entry $(i,j)$ in this block satisfies $\wblk{j}<\wblk{i}=\idq$. By \rf{blockOfj<q}, $\blk{j}<\idq$ and hence $j\notin B_\idq$. 

Recall from \rf{ESblksubs} that, in each $f_i$ with $i\in\eqsetIES\subseteq B_\idq$, we substitute $\Big( y_r + \frac{v_r}{v_\indxk} \cdot x_\indxk^{(d_\indxk-\ESconst)} \Big)^{(\ESconst-c_i)}$ for each $x_r^{(\sij{i}{r})}$ with $d_r-c_i=\sij{i}{r}$ and  $r\in\Jnol\subset B_\idq$.
For a $j\notin B_\idq\supset\Jnol$,  the corresponding derivatives $x_j^{(d_j-c_i)}$ are not replaced in the ES conversion, and for $r\in\Jnol$ (so $j\,,r\,,\indxk$ are distinct),
\begin{align}\label{eq:xjvcci}
\hoder{x_j}{\Big( y_\jtwo+\frac{v_\jtwo}{v_\indxk}\cdot x_\indxk^{(d_\indxk-\ESconst)} \Big)^{(\ESconst-c_i)}}
= \hoder{x_j}{\Big(\frac{v_\jtwo}{v_\indxk}\Big)^{(\ESconst-c_i)}}
\le  \hoder{x_j}{\vecv^{(\ESconst-c_i)}}
= \hoder{x_j}{\vecv}+(\ESconst-c_i)\;.
\end{align}
By \rf{ESblkcond}, $\hoder{x_j}{\vecv} < d_j - \ESconst$. Using \rf{djciblk} and \rf{xjvcci}, we derive
\begin{align}
\hoder{x_j}{\newf_i} &\le \max\left\{   \hoder{x_j}{f_i}, 
\max_{\jtwo\in\nzsetES\setminus\{\indxk\}}\hoder{x_j}{\Big( y_\jtwo+\frac{v_\jtwo}{v_\indxk}\cdot x_\indxk^{(d_\indxk-\ESconst)} \Big)^{(\ESconst-c_i)}}\right\} 
 \nonumber 
 \\&\le\max\left\{   \sij{i}{j},\  \hoder{x_j}{\vecv}+(\ESconst-c_i)  \right\} 
 \nonumber \\&
 <\max\left\{   d_j-c_i,\  (d_j-\ESconst)+(\ESconst-c_i) \right\} 
 = d_j-c_i
\quad\text{ for $i\in \eqsetIES\subseteq B_\idq$}\;.
\label{eq:less1a} 
\end{align}
From the ES conversion described in \THref{ESblk}, we have

\begin{align}
\hoder{x_j}{\newf_i}&=\hoder{x_j}{f_i}< d_j-c_i\quad\,\,\text{ for $i\in B_\idq\setminus\eqsetIES$ and}
 \label{eq:less2a} 
 \\\hoder{x_j}{g_r} &\le \hoder{x_j}{v}<d_j-\ESconst \quad\;\;\;\,\text{ for $r\in \nzsetES$}\;.
 \label{eq:less3a}
\end{align}
Since blocks $\newSig_{\idq \idv_2}$ with $\idv_2<\idq$ contain signature entries $\newsij{i}{j}$ for equations $\newf_i$ and $g_r$, where $i\in B_\idq$ and $r\in\nzsetES$, in variables $x_j$ with $\blk{j}<\idq$, by taking together the inequalities in \rf{less1a}-\rf{less3a},  we have  
\[
\newsij{i}{j}< \ 
\left\{
\begin{alignedat}{7}
&d_j-c_i     &\quad& \text{if $\blk{j} <  \idq$ and $i\in B_\idq$}\\
&d_j-\ESconst   && \text{if $\blk{j}<\idq$ and $i\in\rnge{\idQ+1}{\idQ+s}$}\;;
\end{alignedat}
\right.
\]
recall $\idQ = \sum_{w=1}^\idq N_w$.
Using \rf{blockOfj<q} and the construction of $(\~\newc;\~\newd)$ in \rf{ESblknewcd}, we have
\begin{align*}
\newsij{i}{j}<\newd_{j} - \newc_i \quad\text{for}\quad \wblk{j}<\wblk{i}=\idq\;.
\end{align*}
($\idq$ is the block number of both the original and enlarged diagonal blocks.)

\medskip

(b)\ \ 
$\newSig_{\idq \idv_2}$ with $\idv_2>\idq$. 
An entry $(i,j+s)$ in this block satisfies  $\wblk{j+s}>\wblk{i}=\idq$. 
By \rf{blockOfj>q}, $\blk{j}>\idq$ and hence $j\notin B_\idq\supset\Jnol$. By the same arguments as in (a), the corresponding derivatives $x_j^{(d_j-c_i)}$ are not replaced in the ES conversion.
 
By \rf{ESblkcond}, $\hoder{x_j}{\vecv} \le d_j - \ESconst$. Then by the same derivations as \rrf{less1a}{less3a} in (a), we have

\begin{align}
\hoder{x_j}{\newf_i} &\le d_j-c_i\qquad\qquad\qquad\;\,\text{ for $i\in \eqsetIES\subseteq B_\idq$\;,}  
\label{eq:less1b} 
\\\hoder{x_j}{\newf_i}&=\hoder{x_j}{f_i}\le d_j-c_i\quad\,\,\text{ for $i\in B_\idq\setminus\eqsetIES$, and}
 \label{eq:less2b} 
 \\\hoder{x_j}{g_r} &\le \hoder{x_j}{v}\le d_j-\ESconst \qquad\text{ for $r\in \nzsetES$}\;.
  \label{eq:less3b}
\end{align}
Since blocks $\newSig_{\idq \idv_2}$ with $\idv_2>\idq$ contain signature entries $\newsij{i}{,j+s}$ for equations $\newf_i$ and $g_r$, where $i\in B_\idq$ and $r\in\nzsetES$, in variables $x_j$ with $\blk{j}>\idq$, the inequalities \rrf{less1b}{less3b} yield
\[
\newsij{i}{,j+s}\le \ 
\left\{
\begin{alignedat}{7}
&d_j-c_i     &\quad& \text{if $\blk{j}>\idq$ and $i\in B_\idq$}\\
&d_j-\ESconst   && \text{if $\blk{j}>\idq$ and $i\in\rnge{\idQ+1}{\idQ+s}$}\;.
\end{alignedat}
\right.
\]
Using again \rf{blockOfj>q} and $(\~\newc;\~\newd)$ in \rf{ESblknewcd}, we have $\newsij{i}{,j+s}\le\newd_{j+s} - \newc_i$ for $\wblk{j+s}>\wblk{i}=\idq$, with $j=\rnge{\idQ+1}{n}$.
We can rewrite this inequality as
\begin{align*}
\newsij{i}{j}\le\newd_{j} - \newc_i \quad\text{for}\quad \wblk{j}>\wblk{i}=\idq\;,
\end{align*}
with $j=\rnge{\idQ+1+s}{n+s}$.

\medskip

(c)\ \ $\newSig_{\idq \idv_2}$ is $\newSig_{\idq\idq}$, with $\idv_2=\idq$.  
An entry $(i,j)$ in $\newSig_{\idq\idq}$ satisfies $\wblk{j}=\wblk{i}=\idq$. We view block $\idq$ of the original DAE as a sub-DAE, with a signature matrix $\Sig_{\idq\idq}$ of size $N_\idq$ and an offset pair $(\~c_\idq;\~d_\idq)$. Given that the ES conditions are satisfied by \rf{ESblkcond}, performing the ES conversion as described in \THref{ESblk} is equivalent to applying the basic ES method to this sub-DAE. 
After a conversion, the resulting enlarged signature matrix $\newSig_{\idq\idq}$ of size $N_\idq+s$ has the form
\begin{equation}\label{eq:ESblkqq}
\newSig_{\idq\idq}=
\left[
\begin{array}{c|c|c}
\newSig_{\idq\idq,11} &\newSig_{\idq\idq,12} &\newSig_{\idq\idq,13} \\[1ex]  \hline \\[-2ex]
\newSig_{\idq\idq,21} &\newSig_{\idq\idq,22} &\newSig_{\idq\idq,23} 
\end{array}
\right]\;;
\end{equation}
cf. \FGref{ESblkSig} and the details of the basic ES method in \cite{tgn2015c}.
The two block rows of $\newSig_{\idq\idq}$ correspond to 
$\newf_i$ for $i\in B_\idq$ and 
$g_j$ for $j\in\nzsetES$, respectively.
The three block columns of $\newSig_{\idq\idq}$ correspond to
$x_j$ for $j\in\nzsetES$,
$x_j$ for $j\in B_\idq\setminus\nzsetES$, and
$y_j$ for $j\in\nzsetES$, respectively. 
If we apply the same arguments in the proof of \defterm{Lemma 4.4} for the basic ES method, then we have $\newd_{j} - \newc_i \ge \newsij{i}{j}$ for all entries in $\newSig_{\idq\idq}$.  \qed

\begin{acknowledgements}
The authors acknowledge with thanks the financial support for this research: GT is supported in part by the McMaster Centre for Software Certification through the Ontario Research Fund, Canada, NSN is supported in part by the Natural Sciences and Engineering Research Council of Canada, and JDP is supported in part by the Leverhulme Trust, the UK. 
\end{acknowledgements}

\end{document}